\def\BibTeX{{\rm B\kern-.05em{\sc i\kern-.025em b}\kern-.08em
    T\kern-.1667em\lower.7ex\hbox{E}\kern-.125emX}}
\newtheorem{theorem}{Theorem}
\newtheorem{definition}{Definition}
\begin{document}

\title{Real-Time Service Subscription and Adaptive Offloading Control in Vehicular Edge Computing
\thanks{This work was supported by the MoE Tier-2 grant MOE-T2EP20221-0006 and the MoE Tier-2 grant MOE-T2EP20224-0007.}
}

\author{
\IEEEauthorblockN{Chuanchao Gao, Arvind Easwaran}
\IEEEauthorblockA{\textit{College of Computing and Data Science}}
\IEEEauthorblockA{\textit{Energy Research Institute @ NTU, Interdisciplinary Graduate Programme}}
\IEEEauthorblockA{
\textit{Nanyang Technological University, Singapore}\\
gaoc0008@e.ntu.edu.sg, arvinde@ntu.edu.sg}
% \IEEEauthorblockN{1\textsuperscript{st} Given Name Surname}
% \IEEEauthorblockA{\textit{dept. name of organization (of Aff.)} \\
% \textit{name of organization (of Aff.)}\\
% City, Country \\
% email address or ORCID}
% \and
% \IEEEauthorblockN{2\textsuperscript{nd} Given Name Surname}
% \IEEEauthorblockA{\textit{dept. name of organization (of Aff.)} \\
% \textit{name of organization (of Aff.)}\\
% City, Country \\
% email address or ORCID}
% \and
% \IEEEauthorblockN{3\textsuperscript{rd} Given Name Surname}
% \IEEEauthorblockA{\textit{dept. name of organization (of Aff.)} \\
% \textit{name of organization (of Aff.)}\\
% City, Country \\
% email address or ORCID}
% \and
% \IEEEauthorblockN{4\textsuperscript{th} Given Name Surname}
% \IEEEauthorblockA{\textit{dept. name of organization (of Aff.)} \\
% \textit{name of organization (of Aff.)}\\
% City, Country \\
% email address or ORCID}
% \and
% \IEEEauthorblockN{5\textsuperscript{th} Given Name Surname}
% \IEEEauthorblockA{\textit{dept. name of organization (of Aff.)} \\
% \textit{name of organization (of Aff.)}\\
% City, Country \\
% email address or ORCID}
% \and
% \IEEEauthorblockN{6\textsuperscript{th} Given Name Surname}
% \IEEEauthorblockA{\textit{dept. name of organization (of Aff.)} \\
% \textit{name of organization (of Aff.)}\\
% City, Country \\
% email address or ORCID}
}

\maketitle

\begin{abstract}
Vehicular Edge Computing (VEC) has emerged as a promising paradigm for enhancing the computational efficiency and service quality in intelligent transportation systems by enabling vehicles to wirelessly offload computation-intensive tasks to nearby Roadside Units. However, efficient task offloading and resource allocation for time-critical applications in VEC remain challenging due to constrained network bandwidth and computational resources, stringent task deadlines, and rapidly changing network conditions. To address these challenges, we formulate a Deadline-Constrained Task Offloading and Resource Allocation Problem (DOAP), denoted as $\mathbf{P}$, in VEC with both bandwidth and computational resource constraints, aiming to maximize the total vehicle utility. To solve $\mathbf{P}$, we propose $\mathtt{SARound}$, an approximation algorithm based on Linear Program rounding and local-ratio techniques,  that improves the best-known approximation ratio for DOAP from $\frac{1}{6}$ to $\frac{1}{4}$. Additionally, we design an online service subscription and offloading control framework to address the challenges of short task deadlines and rapidly changing wireless network conditions. To validate our approach, we develop a comprehensive VEC simulator, VecSim, using the open-source simulation libraries OMNeT++ and Simu5G. VecSim integrates our designed framework to manage the full life-cycle of real-time vehicular tasks. Experimental results, based on profiled object detection applications and real-world taxi trace data, show that $\mathtt{SARound}$ consistently outperforms state-of-the-art baselines under varying network conditions while maintaining runtime efficiency.
\end{abstract}

\begin{IEEEkeywords}
Deadline-Constrained Task Offloading and Resource Allocation, Vehicular Edge Computing, VEC Simulator, Approximation Algorithm
\end{IEEEkeywords}

\addtolength{\textfloatsep}{-2mm}

\section{Introduction}
Internet of Vehicles (IoV), as a derivative technology of Internet of Things, has revolutionized intelligent transportation systems by enabling seamless interaction between vehicles and infrastructure. Leveraging advanced wireless technologies such as the ultra-reliable low-latency communication (URLLC) of 5G, IoV supports a wide range of time-critical vehicle applications, such as autonomous driving, intelligent traffic management, and online navigation \cite{fan2023joint}. These applications typically require substantial computational resources and impose stringent latency requirements, posing significant challenges for resource-constrained onboard processing units.
%These applications typically require substantial computational resources and impose stringent latency requirements, posing significant challenges in terms of energy consumption from on-board processing units.

Vehicular Edge Computing (VEC) systems have emerged as a promising solution to enhance computational efficiency in IoV by enabling vehicles to offload computation-intensive tasks to nearby Roadside Units (RSUs) \cite{dai2018joint}. Each RSU comprise a 5G module that provides wireless bandwidth for task offloading and a server that offers computational resources for task processing. By processing tasks on nearby RSUs, VEC reduces reliance on centralized cloud infrastructure and lowers communication latency between vehicles and servers, making it well-suited for latency-sensitive vehicular tasks.

Offloading tasks to RSUs offers several advantages, such as enhanced task performance, reduced energy consumption, and increased operational safety. Computation-intensive tasks typically consume substantial energy, and offloading these tasks to RSUs can significantly reduce the energy burden on mobile devices, which is important for energy-constrained devices such as unmanned aerial vehicles (UAVs) or robots, especially those powered by energy harvesting or with limited recharging opportunities \cite{oliveira2024internet}. Besides, in intelligent transportation systems, VEC can support collaborative perception services, enabling vehicles to acquire contextual information beyond their sensing range. This extended perception is vital for applications such as anomaly detection, where enhanced accuracy contributes directly to safer and more reliable decision-making by the vehicle’s control system.

Despite the advantages of VEC, the bandwidth and computational capacities of RSUs are typically constrained, necessitating efficient strategies for \textit{task offloading} (determining the RSU to deploy the service for each task) and \textit{resource allocation} (optimizing the allocation of bandwidth and computational resources).
%\textcolor{blue}{talk about accuracy related stuff}
% existing works on this topic
Jointly optimizing task offloading and resource allocation in VEC enables adaptive resource management based on RSU resource availability and wireless channel quality between vehicles and RSUs. For instance, the VEC with abundant computational resources but limited wireless bandwidth can allocate more computational resources and less bandwidth to each task to accommodate more tasks without violating task deadlines. 

%Besides, deploying the service on an RSU with a high-quality network connection helps reduce the bandwidth demand for task offloading, enhancing overall resource utilization efficiency. 
Numerous studies have investigated Deadline-constrained task Offloading Problems (DOP) in edge computing systems, typically assuming fixed resource allocation for each task (e.g., \cite{lin2023a, li2022maximizing, li2023throughput, li2024computation, liao2024an, ma2022mobility, xu2024age, yang2024a, zhao2021energy, liu2021energy}). However, the joint optimization of task offloading and resource allocation, referred to as Deadline-constrained task Offloading and resource Allocation Problem (DOAP) \cite{gao2022deadline, fan2023joint, hu2020dynamic, qi2024joint, dai2018joint, li2019cooperative, vu2021optimal, gao2025local}, remains underexplored. Moreover, existing studies primarily focus on either exponential-time exact (optimal) algorithms \cite{vu2021optimal, fan2023joint} or polynomial-time heuristics without performance guarantees \cite{gao2022deadline, hu2020dynamic, qi2024joint, dai2018joint, li2019cooperative}, necessitating efficient strategies that balance computational efficiency and solution quality. While a $\frac{1}{6}$-approximation algorithm for DOAP has been proposed recently \cite{gao2025local}, ensuring that the achieved objective is at least $\frac{1}{6}$ of the optimal value, we show that this guarantee can be further improved in this paper.
%Gao \emph{et al.} \cite{gao2025local} introduced a $\frac{1}{6}$-approximation algorithm for DOAP, ensuring that the achieved objective is at least $\frac{1}{6}$ of the optimal value. In this paper, we show that this guarantee is relatively loose and improve it to $\frac{1}{4}$ using Linear Program (LP) rounding and local-ratio techniques.

%Some studies on DOAP have introduced exact (optimal) algorithms \cite{vu2021optimal, fan2023joint}, but their exponential time complexity makes them impractical for dynamic VEC systems. To improve runtime efficiency, other studies have proposed heuristic algorithms with polynomial time complexity \cite{gao2022deadline, hu2020dynamic, qi2024joint, dai2018joint, li2019cooperative}; however, these methods lack theoretical guarantees. Additionally, several of these studies assumed a fixed bandwidth allocation for tasks and neglected bandwidth constraints \cite{hu2020dynamic, dai2018joint, qi2024joint, fan2023joint}. Gao \emph{et al.} \cite{gao2025local} introduced a polynomial-time $\frac{1}{6}$-approximation algorithm for DOAP, ensuring that the achieved objective is at least $\frac{1}{6}$ of the optimal value. In this paper, we show that this guarantee is relatively loose and improve it to $\frac{1}{4}$ using a Linear Program (LP) rounding technique. Besides, many existing approaches struggle to balance algorithm-runtime efficiency and solution quality, underscoring the need for more robust and scalable solutions.

Besides the resource utilization efficiency, VEC presents additional challenges that distinguish it from conventional edge computing systems. First, vehicular tasks are highly latency-sensitive and often have ultra-short deadlines (e.g., of the order of $10$ ms \cite{Garcia2021A}). Given such stringent deadlines, the overhead of scheduling algorithm execution and service initialization can consume a significant portion of the available execution window during runtime, making deadline compliance particularly challenging. Second, wireless channel quality in VEC changes rapidly due to vehicle mobility and environmental factors. The 5G network is commonly used in VEC due to its ultra-reliability and low latency. To ensure reliable communication, 5G networks dynamically adjust the modulation and coding scheme based on channel conditions, which in turn affects the data offloading rate. As a result, resource allocation decisions that appear optimal at one moment may become suboptimal or even infeasible (i.e., deadline miss) by the time task offloading begins. These challenges highlight the necessity for an online adaptive offloading control framework capable of responding to real-time wireless channel quality changes while optimizing resource utilization in VEC. To the best of our knowledge, none of the existing studies on DOP or DOAP has addressed these two critical challenges.

In this paper, we study a DOAP for real-time service subscription in VEC under both bandwidth and computational resource constraints, i.e., the total resources that each RSU can allocate to tasks are bounded by its resource capacities. 
The objective is to maximize the total utility for all vehicles. This paper considers a general and flexible ``task-dependent'' utility function, enabling its applicability to a wide range of real-world use cases. We refer to this problem as $\mathbf{P}$ and formulate it as an integer programming problem. The Generalized Assignment Problem (GAP), a known NP-Hard problem \cite{fleischer2006tight}, can be mapped to an instance of $\mathbf{P}$ where resource allocation is fixed and bandwidth constraint is omitted, implying that $\mathbf{P}$ is also NP-hard. To tackle $\mathbf{P}$, we propose an approximation algorithm leveraging LP rounding and a local ratio technique \cite{bar2004local}, improving the best-known approximation ratio for DOAP from $\frac{1}{6}$ \cite{gao2025local} to $\frac{1}{4}$, while maintaining efficient algorithm-runtime complexity.

To address the challenges of short task deadlines and rapidly changing network conditions in VEC, we propose an online service subscription and offloading control framework for periodic real-time tasks. In this framework, task offloading requests are first queued at the scheduler, which periodically determines service deployment and resource allocation for these requests. Once a task is assigned, its service initialization request is sent to the RSU. After service initialization, an offload grant is issued to the task, allowing the task's job instances to offload (before a grant is received, job instances are processed locally). Since scheduling and service initialization are complete before offloading begins, they do not affect the execution of offloaded job instances. \textit{This mechanism significantly enhances the time-complexity flexibility for scheduling algorithm design, enabling the derivation of finer-grained solutions even for vehicular tasks with short deadlines.}
%\footnote{In this work, we assume that all the job instances can always be scheduled locally in the vehicles when required, albeit at the expense of higher energy consumption.}. 
To adapt to wireless channel fluctuation, vehicles periodically (e.g., every $10$ ms) send Sounding Reference Signals (SRS) to RSUs hosting their services. The RSU then processes the SRS in the 5G module to estimate the uplink channel quality and assesses whether the offloaded jobs under the current channel quality can meet deadlines. If the channel quality degrades below acceptable limits, the RSU temporarily suspends job offloading, requiring jobs to be processed locally in the vehicle. As this assessment is lightweight and the SRS feedback is independent from job offloading, the job execution window is not influenced. This continuous SRS feedback mechanism improves VEC's adaptability to dynamic network conditions. 
Thus, the main contributions of this paper are summarized as follows.
\begin{itemize}
    \item We address a DOAP, $\mathbf{P}$, for real-time service subscription in VEC with both bandwidth and computational resource constraints, aiming to maximize vehicle utility. To solve $\mathbf{P}$, we propose an approximation algorithm, $\mathtt{SARound}$, based on LP rounding and local-ratio techniques, improving the best-known approximation ratio for DOAP from $\frac{1}{6}$ to $\frac{1}{4}$ while preserving algorithm-runtime efficiency.
    \item We introduce an online service subscription and offloading control framework to address the challenges of short task deadlines and rapidly changing wireless channel quality in VEC. To evaluate its effectiveness, we develop a VEC simulator, VecSim, that integrates our framework to manage the full life-cycle of vehicular tasks, including service subscription, resource allocation, and offloading control. Notably, VecSim is the first simulator specifically designed to handle the full life-cycle of deadline-constrained vehicular tasks in VEC.
    \item We evaluate $\mathtt{SARound}$ using profiled object detection workloads and a real-world taxi trace dataset \cite{taxi-data} in VecSim, considering energy saving for vehicles as the utility function. Results show that $\mathtt{SARound}$ consistently outperforms all benchmark algorithms in the literature \cite{hu2020dynamic, dai2018joint, qi2024joint, gao2022deadline, li2019cooperative, fan2023joint, gao2025local} under varying network conditions, and maintains almost linear runtime scalability with respect to the number of requests.
\end{itemize}
The remainder of this paper is organized as follows. Related work is discussed in Section \ref{sec:related-work}. Section \ref{sec:system-architecture} introduces the VEC architecture and problem formulation. The online service subscription and offloading control framework is presented in Section \ref{sec:framework}. Section \ref{sec:algorithm} presents the approximation algorithm $\mathtt{SARound}$. Section \ref{sec:experiment} evaluates $\mathtt{SARound}$'s performance in our developed VEC simulator. Section \ref{sec:conclusion} concludes the paper.  

\section{Related Works}\label{sec:related-work}
Due to the promising prospects of VEC, research interest in task offloading problems within this field has grown significantly. 
%Readers seeking a comprehensive overview of task offloading in VEC can refer to related surveys \cite{ramanathan2020a, islam2021a, sadatdiynov2023a}. 
Research problems in this area are generally categorized into deadline-free (e.g., response time minimization) and deadline-constrained problems \cite{ramanathan2020a}. This section reviews state-of-the-art research on deadline-constrained problems, which are broadly categorized into two classes: (i) DOP, where only task offloading decisions are made under fixed resource allocations, and (ii) DOAP, where both task offloading and resource allocations are jointly optimized. In both DOP and DOAP, each task can be offloaded to at most one destination out of one or more candidates, and multiple tasks can be offloaded to the same destination as long as the resource constraints are not violated.

Studies on DOP focused solely on task offloading by assuming fixed bandwidth allocations and computational resource allocations being either fixed \cite{lin2023a, li2024computation, liao2024an, ma2022mobility, li2022maximizing, yang2024a, xu2024age} or predetermined based on task deadlines \cite{li2023throughput, zhao2021energy, liu2021energy}. Among these, some studies assumed VEC with neither bandwidth nor computational resource constraint \cite{yang2024a, li2024computation}, some considered VEC with only computational resource constraint \cite{li2022maximizing,  li2023throughput, liao2024an, ma2022mobility, xu2024age, zhao2021energy}, while others investigated VEC scenarios considering both bandwidth and computational resource constraints \cite{lin2023a, liu2021energy}. Specifically, a few studies \cite{zhao2021energy, liu2021energy} considered a single-RSU model, where offloading decisions focused solely on whether to offload a task rather than selecting a target RSU.

In contrast to DOP, DOAP jointly optimizes task offloading and resource allocation \cite{hu2020dynamic, dai2018joint, qi2024joint, vu2021optimal, gao2022deadline, li2019cooperative, fan2023joint, gao2025local}. Several studies assumed a fixed bandwidth allocation for tasks and omitted bandwidth constraints \cite{hu2020dynamic, dai2018joint, qi2024joint, fan2023joint}, proposing heuristic solutions without theoretical guarantees. While some studies incorporated both bandwidth and computational resource constraints, their solutions were either exact with exponential time complexity (e.g., \cite{vu2021optimal}) or heuristic without theoretical guarantees (e.g., \cite{gao2022deadline, li2019cooperative}). 
%A common approach in the literature \cite{hu2020dynamic, dai2018joint, qi2024joint, fan2023joint} is to decompose DOAP into two subproblems—task offloading and resource allocation—and solve them iteratively until convergence. Additionally, some studies solved DOAP using non-cooperative games \cite{li2019cooperative}, while others proposed greedy algorithms that allocated tasks in non-increasing order of task utility-to-resource ratios \cite{gao2022deadline}. 
%randomized rounding, assuming predefined offloading and processing times to maximize system throughput \cite{wang2019joint}, whereas others employed 
To balance the runtime efficiency and solution quality, some studies leveraged a recursive local-ratio technique to maximize total system utility \cite{gao2025local}, achieving an approximation ratio of $\frac{1}{6}$. We adapted these algorithms for our DOAP problem, $\mathbf{P}$, and used them as baselines in our experiments. Notably, none of these studies have considered the impact of scheduling overhead and service initialization on offloaded task execution, nor addressed the rapidly changing wireless network quality in VEC. In contrast, this paper addresses a DOAP problem in VEC with bandwidth and computational resource constraints, presenting an approximation algorithm, $\mathtt{SARound}$, that improves the best-known approximation guarantee for DOAP from $\frac{1}{6}$ to $\frac{1}{4}$ while preserving runtime efficiency.
%, $\mathtt{SARound}$, which improves the best-known approximation guarantee for DOAP from $\frac{1}{6}$ to $\frac{1}{4}$. 
Additionally, we have developed an online service subscription and an adaptive offloading framework to address rapidly changing wireless network quality in VEC.

Building a real testbed for VEC research is technically challenging and costly, making simulators essential tools for evaluating system performance. 
%Existing simulation frameworks for edge computing systems are primarily extensions of cloud computing simulators or network simulators. 
Several widely used edge computing simulators, such as iFogSim \cite{gupta2017ifogsim} and EdgeCloudSim \cite{sonmez2017edgecloudsim}, extend CloudSim to evaluate task scheduling and resource management strategies. However, these simulators lack support for 5G network simulation, making them unsuitable for modeling online offloading control in VEC, where real-time wireless channel quality feedback is essential. On the other hand, network-based simulators such as Simu5G \cite{nardini2020simu5g} (built on OMNeT++ \cite{Varga2010}) and Fogbed \cite{mendonca2021fogbed} (based on Mininet) primarily focus on network characteristics and containerized service deployment. None of these tools provides an integrated computation offloading model that jointly optimizes bandwidth and computational resource allocation for real-time applications or supports online offloading control in dynamic VEC environments. In this paper, we develop VecSim, a VEC simulator that integrates real-time vehicle mobility modeling, 5G-based V2X networking (via Simu5G), and a comprehensive computation offloading mechanism. Unlike existing simulators, VecSim supports online offloading control, joint bandwidth and computational resource allocation for real-time tasks, and task life-cycle management.

\section{System Model and Problem Formulation}\label{sec:system-architecture}
A VEC system comprises a set of RSUs and a scheduler (Fig. \ref{fig:vec_system1}). RSUs are deployed along roadsides to provide services to vehicles, leveraging the 5G network for task offloading due to its high reliability and low latency. Each RSU contains a 5G module that manages the bandwidth allocation for data offloading and a server that provides computational resources for hosting vehicular services. The scheduler communicates with RSUs via Ethernet cables or dedicated long-range wireless networks (e.g., 4G-LTE), and is responsible for determining service deployment and resource allocation for vehicular requests. This study adopts a centralized architecture to enable coordinated scheduling and global optimization. Although deploying decentralized schedulers can improve system robustness against scheduler failures, scheduler fault tolerance is not the primary focus of this work, and investigating decentralized scheduler deployment is left for future research.

Let $\mathcal{M}$ be the set of $M$ RSUs in the VEC, where $m_k \in \mathcal{M}$ denotes the $k$-th RSU. The computational resources of each RSU are measured in computing units (CUs), a metric determined by the specific server hardware configuration. For example, with NVIDIA's Green Contexts feature \cite{nvidia2025green}, the computational resources of a GPU-enabled RSU server can be partitioned into multiple CUs, each comprising a predefined amount of streaming multiprocessor units. Let $C_k$ be the total CUs available at $m_k$, and let $C = \max\{C_k | m_k \in \mathcal{M}\}$. Orthogonal Frequency Division Multiple Access, a core technology in 5G networks, divides a network channel into multiple orthogonal subcarriers. Every $12$ consecutive subcarriers form a resource block (RB), the smallest unit of bandwidth that can be allocated to users. Let $B_k$ be the total RBs available at RSU $m_k$, and let $B = \max\{B_k | m_k \in \mathcal{M}\}$. 

Let $\mathcal{N}$ denote the set of $N$ real-time tasks in the VEC, where $n_i \in \mathcal{N}$ represents the $i$-th task. Let $v_i$ be the vehicle of task $n_i$. Since these tasks typically operate at a specific frequency, task $n_i$ has a period $\delta_i$, representing the rate at which job instances of $n_i$ are released. $\delta_i$ also serves as the implicit relative deadline of $n_i$ and is measured in seconds. Let $\theta_i$ denote the input data size of $n_i$, measured in Megabytes (MB). Since the output data size is typically much smaller than the input and the downlink bandwidth capacity in 5G is generally greater than the uplink capacity, we omit the negligible result return time in our model, as in \cite{lin2023a}. Let $\mathcal{M}_i$ be the set of RSUs accessible by $v_i$, i.e., RSUs that $n_i$'s service can be deployed on. A notation summary is provided in Table \ref{table:notation}.

\begin{figure}
    \centering
    \includegraphics[width=0.8\linewidth, page=1]{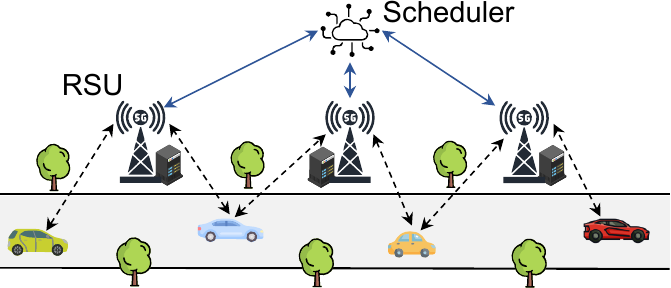}
    \caption{An example of the VEC}
    \label{fig:vec_system1}
\end{figure}

% Let $\mathcal{V}$ denote the set of vehicles in the VEC, where $v_i \in \mathcal{V}$ represents the $i$-th vehicle. Each vehicle has one or more running applications (tasks). For notation simplicity, we consider that each vehicle $v_i$ has one task $n_i$, and the vehicle with multiple tasks is modeled as multiple virtual vehicles corresponding to the same physical vehicle. Since vehicular applications typically operate at a specific frequency, $n_i$ is periodic and has a period $\delta_i$; the period $\delta_i$ of $n_i$ also serves as its implicit relative deadline and is measured in seconds. Let $\tau$ be the set of all tasks of vehicles in $\mathcal{V}$, which contains $N$ tasks. Before receiving a service access grant, the jobs generated by $n_i$ are processed locally. Let $\theta_i$ denote the input data size of $n_i$, measured in Megabytes (MB). Since the output data size is typically much smaller than the input and the downlink bandwidth capacity in 5G is generally greater than the uplink capacity, we disregard the negligible result return time, as in \cite{lin2023a}. Let $\mathcal{M}_i$ be the set of RSUs covering vehicle $v_i$ when it submits a request to the scheduler. 

In 5G networks, each vehicle periodically sends Sounding Reference Signals (SRS) to nearby RSUs. Each RSU then processes the received SRS to estimate the wireless channel quality between the vehicle and the RSU \cite{3gpp38211}. The channel quality can be affected by multiple factors, such as the physical distance, data transmission power, signal path loss, obstacles, and interference caused by other vehicles' data transmission. For example, Kutila \emph{et al.} \cite{kutila20205g} have provided an empirical measurement to demonstrate how SRS strength varies due to vehicle mobility. Based on the estimated channel quality, each RSU chooses an appropriate Modulation and Coding Scheme (MCS) for the vehicle to ensure reliable data transmission. The MCS decides the number of bytes that an RB can carry within a fixed time interval, which in turn determines the data offloading rate in wireless networks. Let $\mu_{ik}$ be the data offloading rate per RB per second when task $n_i$ offloads to RSU $m_k$ based on the latest SRS feedback. Mid-band 5G base stations typically provide coverage over a range of $500$ to $1000$ meters in urban environments. Within a $100$ms task period, a vehicle traveling at $60$km/h moves only about $1.67$ meters—a distance negligible compared to the RSU coverage range. Therefore, consistent with prior VEC studies \cite{zhan2020mobility, raza2022task}, we assume that the wireless channel quality (also the $\mu_{ik}$), as measured via uplink SRS, remains approximately constant within a vehicular task period (i.e., of the order of $10$ ms).

%Similar to prior VEC studies \cite{zhan2020mobility, raza2022task}, we assume that the wireless channel quality (also the $\mu_{ik}$), as measured via uplink SRS, remains approximately constant within a vehicular task period (i.e., of the order of $10$ ms).

% \addtolength{\belowcaptionskip}{-2mm}
\begin{table}
    \caption{Notation (Parameters and Variables)}
    \resizebox{\columnwidth}{!}{
    % \begin{tabular}[t]{p{0.02\textwidth} p{0.46\textwidth}}
    \begin{tabular}[t]{cl}
        \toprule
        \multicolumn{1}{c}{symb.} & \multicolumn{1}{c}{definition} \\
        \midrule
        $\mathcal{M}$ & set of RSUs, where $|\mathcal{M}| = M$ \\ 
        $C_k$ & total CUs of RSU $m_k$; $C=\max\{C_k|m_k \in \mathcal{M}\}$ \\
        $B_k$ & total RBs of RSU $m_k$; $B=\max\{B_k|m_k \in \mathcal{M}\}$ \\
        $\mathcal{N}$ & set of vehicular tasks, where $|\mathcal{N}|=N$\\
        $v_i$ & the vehicle of task $n_i \in \mathcal{N}$\\
        $\delta_i$ & period (deadline) of task $n_i \in \mathcal{N}$\\
        $\theta_i$ & input data size of task $n_i \in \mathcal{N}$\\
        $\mathcal{M}_i$ & set of RSUs accessible by vehicle $v_i$\\
        $d_i^l$ & local processing time of task $n_i$\\
        $d_{ik}^o$ & offloading time from task $n_i$ to RSU $m_k$\\
        $\mu_{ik}$ & offloading rate per RB per second from $n_i$ to $m_k$\\
        $d_{ik}^p$ & processing time of task $n_i$ on RSU $m_k$\\
        % $d_{ik}^{tot}$ & total completion time of task $n_i$ when offloaded to RSU $m_k$\\
        $u_i$ & $u_i(x_{ik},b_{ik},c_{ik})$, utility function of task $n_i$\\
        %$e_{ik}^{save}$ & energy saving per second by serving $n_i$ on $m_k$\\
        $\ell$ & $\ell \triangleq \left< n_\ell, m_\ell, b_\ell, c_\ell \right>$, a service instance of task $n_\ell$ \\ 
        $u(\ell)$ & utility value by serving $n_\ell$ following $\ell$\\
        $\mathcal{L}$ & set of all feasible service instances\\
        $\mathcal{L}_i$ & $\mathcal{L}_i \subseteq \mathcal{L}$, set of service instances for task $n_i$ \\
        $\mathcal{L}^k$ & $\mathcal{L}^k \subseteq \mathcal{L}$, set of service instances related to RSU $m_k$ \\
        $\mathcal{L}(\ell)$ & the set of service instances belong to the same task as $\ell$ \\
        % \hline 
        $x_{ik}$ & service deployment binary decision variable of $n_i$ \\ 
        $b_{ik}$ & RBs that RSU $m_k$ allocates to task $n_i$ \\ 
        $c_{ik}$ & CUs that RSU $m_k$ allocates to task $n_i$ \\ 
        $z_\ell$ & selection binary variable for service instance $\ell$\\
        \bottomrule
    \end{tabular}}
    \label{table:notation}
\end{table}
% \addtolength{\belowcaptionskip}{2mm}

\textit{Local Processing.} While not having the permission to offload (either service initialization is not yet done or the offloading is temporarily suspended), the job instances generated by a task are processed locally within the vehicle. When task $n_i$ is processed locally, let $d_i^{l}$ be the local processing time. In this work, we assume $d_i^{l} \le \delta_i$, i.e., job instances can always be scheduled locally within their deadlines. Besides local processing, a task can also be offloaded to RSUs for processing to improve task performance, enhance operational safety, or reduce energy consumption.
%, and $p_i^{l}$ be the local processing power required for $n_i$
%Considering its period $\delta_i$, the energy consumption (per second) for processing $n_i$ locally is given by $e_i^{l} = p_i^{l} \cdot d_i^{l}/\delta_i$.

\textit{Remote Processing.} We use a binary variable $x_{ik}$ to represent the service deployment decision for task $n_i$. Specifically, $x_{ik} = 1$ if the service for $n_i$ is deployed on RSU $m_k \in \mathcal{M}_i$; otherwise, $x_{ik} = 0$. Let $b_{ik}$ and $c_{ik}$ denote the RBs and CUs that RSU $m_k$ allocates to $n_i$, respectively, where $b_{ik} \in \{0, 1, ..., B_k\}$ and $c_{ik} \in \{0, 1, ..., C_k\}$. The offloading time of each job instance of $n_i$ is calculated as $d_{ik}^{o} = \theta_i/(b_{ik} \cdot \mu_{ik})$. The processing time for each job instance of $n_i$ on $m_k$ is denoted as $d_{ik}^{p}$, which is a function of the allocated CUs $c_{ik}$, the hardware type of $m_k$, and the requested service type of $n_i$. 
%Let $d_{ik}^{tot}$ denote the total completion time when task $n_i$ is processed on RSU $m_k$, where $d_{ik}^{tot} = d_{ik}^{o} + d_{ik}^{p}$. 
The allocated bandwidth and computational resources must ensure that the deadline constraint of each task $n_i$ can be satisfied, i.e.,
\begin{equation}\label{eq:system2}
    d_{ik}^{o} + d_{ik}^{p} = {\theta_i}/{(b_{ik} \cdot \mu_{ik})} + d_{ik}^{p} \le \delta_i.
\end{equation}
Each task $n_i$ is associated with a non-negative and task-dependent utility function $u_i$, whose value $u_i(x_{ik}, b_{ik}, c_{ik})$ depends on $x_{ik}$, $b_{ik}$, and $c_{ik}$. The function $u_i$ generalizes any function that meet the following form:
\begin{enumerate}[(i)]
    \item $u_i(x_{ik}, b_{ik}, c_{ik}) = 0$ if $x_{ik}=0$;
    \item $u_i(x_{ik}, b_{ik}, c_{ik}) = 0$ if the deadline constraint of task $n_i$ is not met, i.e., $d_{ik}^{o} + d_{ik}^{p} > \delta_i$.
\end{enumerate}
%Besides, $u_i(x_{ik}, b_{ik}, c_{ik}) = 0$ if $x_{ik}=0$ or the deadline constraint of task $n_i$ is not met , i.e., $d_{ik}^{o} + d_{ik}^{p} > \delta_i$. 
This general function allows the system to accommodate diverse performance requirements, such as improved information freshness, enhanced task performance, reduced energy consumption, and
increased operational safety. For instance, when deploying AI applications on resource-constrained onboard devices, model compression techniques (e.g., quantization) are commonly used to reduce computational and memory demands, often at the expense of model accuracy. In such cases, vehicles may prefer to offload tasks to RSUs to obtain more accurate inference results. In this context, $u_i$ can represent the improvement in inference accuracy resulting from offloading. Alternatively, when the system aims to optimize energy consumption, $u_i$ can reflect the energy savings gained by offloading tasks to RSUs.

% Let $p_i^{o}$ denote the offloading power of vehicle $v_i$. The energy consumption (per second) for offloading $n_i$ is given by $e_{ik}^{o} = p_i^{o} \cdot d_{ik}^{o}/\delta_i$. Thus, the energy saving (per second) for vehicle $v_i$ by offloading and processing $n_i$ on RSU $m_k$ is expressed as
% \begin{equation}\label{eq:system1}
%     e_{ik}^{save} = e_i^{l} - e_{ik}^{o} = (p_i^{l} \cdot d_i^{l} - p_{i}^{o} \cdot d_{ik}^{o})/\delta_i.
% \end{equation}
% It is important to note that we only consider tasks for which offloading yields a positive energy savings; otherwise, the task can be executed locally. 

%In VEC, RSUs have limited bandwidth and computational resources, necessitating efficient scheduling decisions for service deployment and resource allocation to maximize VEC resource utilization efficiency. In this study, we consider a general utility function for each offloaded vehicular task that decreases with the increasing completion time.
%This paper focuses on efficiency from an energy perspective. Since RSUs typically have a continuous and adequate energy supply, we aim to maximize the total energy savings for vehicles by offloading as many tasks as possible and as frequently as possible, while satisfying resource and task deadline constraints. The energy saving for each task depends on both the energy savings per job instance and the frequency at which job instances are released. Therefore, we measure the energy saving for each task in terms of energy saving per second.
In VEC, RSUs have limited bandwidth and computational resources, necessitating efficient scheduling decisions for service deployment and resource allocation to maximize VEC resource utilization efficiency. In this paper, our goal is to determine the optimal service deployment and resource allocation $\left<x_{ik}, b_{ik}, c_{ik}\right>$ for each task $n_i$ to maximize the total utility for vehicles while adhering to the resource constraints of RSUs and the task deadline requirements. We refer to this problem as $\mathbf{P}$ and formulate it as follows.
% \begin{equation}\label{eq:p1}
%     (\mathbf{P}) \ \ \max { \sum_{n_i \in \mathcal{N}}\sum_{m_k \in \mathcal{M}} u_i(d_{ik}^{tot}) \cdot x_{ik}}
% \end{equation}
% subject to:
% \addtocounter{equation}{-1}
\begin{subequations}
    \allowdisplaybreaks
    \begin{align}
        (\mathbf{P}) \ \ \max \sum_{n_i \in \mathcal{N}}\sum_{m_k \in \mathcal{M}} u_i(x_{ik}, b_{ik}, c_{ik}) & \label{eq:p1}\\
        \text{subject to: }  x_{ik} \cdot (d_{ik}^{o} + d_{ik}^{p}) \le \delta_i, \forall n_i \in \mathcal{N}, & \forall m_k \in \mathcal{M} \label{eq:p1a} \\
        \sum\nolimits_{n_i \in \mathcal{N}} b_{ik} \le B_k,& \forall m_k \in \mathcal{M} \label{eq:p1b} \\
        \sum\nolimits_{n_i \in \mathcal{N}} c_{ik} \le C_k,& \forall m_k \in \mathcal{M} \label{eq:p1c} \\
        \sum\nolimits_{m_k \in \mathcal{M}_i}x_{ik} \le 1, \sum\nolimits_{m_k \in \mathcal{M}\setminus \mathcal{M}_i} x_{ik} = 0,& \forall n_i \in \mathcal{N} \label{eq:p1d} \\
        x_{ik} \in \{0,1\}, b_{ik}, c_{ik} \in \mathbb{Z}_{\ge 0}, \forall n_i \in \mathcal{N}, & \forall m_k \in \mathcal{M} \label{eq:p1f}
    \end{align}
\end{subequations}
Eq. \eqref{eq:p1a} ensures the deadline constraint can be met for each offloaded task. Eqs. \eqref{eq:p1b} and \eqref{eq:p1c} guarantee the bandwidth and computational resource constraints of each RSU will not be violated. Finally, Eq. \eqref{eq:p1d} ensures that a task's service can be deployed on at most one RSU accessible by its vehicle.

%\aea{Add a paragraph to explain what happens and what needs to be done online. So the above problem is solved periodically, and then the channel condition changes and the pause/admission grants are issued temporarily, etc. Also what happens when a vehicle moves and the set of accessible RSUs changes? The old request is discarded and a new request is generated and sent to the scheduler? Without this explanation of the online mechanism the problem statement is incomplete. You need to explain what needs to be addressed online. Then you can say that P is solved in section 4 and the online framework is presented in section 5.}

\textbf{Problem Complexity.} Given a solution of $\mathbf{P}$, we can easily verify in polynomial time if the solution satisfies all the constraints, making the decision version of $\mathbf{P}$ an NP problem. Besides, a known NP-Hard problem, GAP \cite{fleischer2006tight}, can be mapped to an instance of $\mathbf{P}$ in which the bandwidth and computational resource allocations are predetermined and bandwidth contention is omitted. Because a simplified instance of $\mathbf{P}$ is NP-Hard, $\mathbf{P}$ is also NP-Hard.
%\aea{This is not complete. You need to explain why problem P is in NP first. Then, you can say that because a simpler problem is NP-Hard, this is also NP-Hard.}

\begin{figure}
    \centering
    \includegraphics[width=\linewidth, page=1]{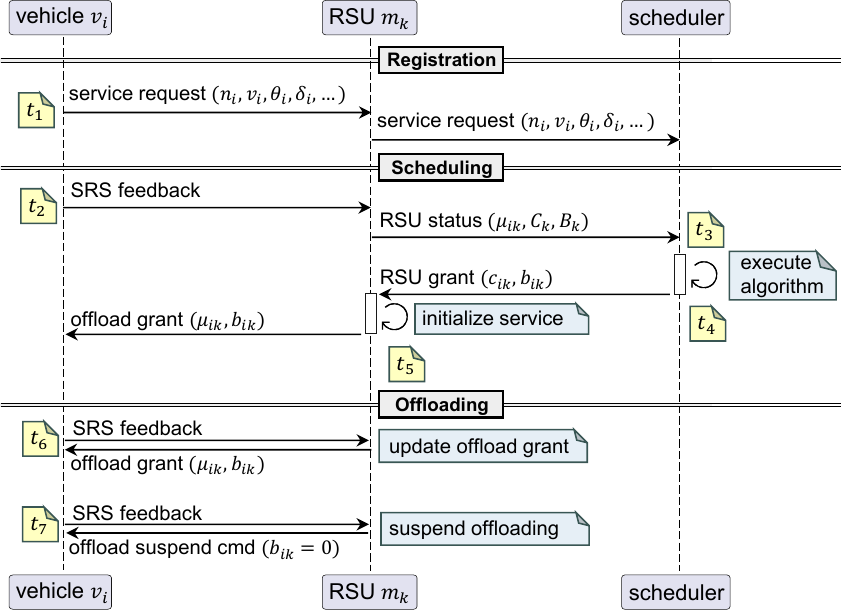}
    \caption{Workflow of the online service subscription and offloading control framework (RSU $m_k$ and task $n_i$ are used as an example).
    %\aea{algorithm execute -- execute algorithm, service initialize -- initialize service, admission grant -- offload grant. Where is offload suspend? And what happens when RSU set changes? You should show the life-cycle of a single task completely here. Show as the vehicle moves from one RSU set to another, show temporary grants and suspensions, etc. Also, you should include the info. that is transmitted with each arrow in the figure. For this figure, an uml sequence diagram seems most appropriate.}
    }
    \label{fig:framework1}
\end{figure}

In $\mathbf{P}$, resource allocation is determined based on the data offloading rate estimated from the latest SRS feedback before scheduling. However, wireless channel quality changes dynamically in VEC. As a result, resource allocations that are feasible during scheduling may become infeasible by the time task offloading begins. Additionally, an RSU that is accessible to a task during scheduling may subsequently become inaccessible. These challenges underscore the need for online offloading control and for solving $\mathbf{P}$ periodically to adapt service deployments and resource allocations in VEC. Next, we introduce the online offloading control framework in Section \ref{sec:framework}, followed by our solution for $\mathbf{P}$ in Section \ref{sec:algorithm}.

\section{Online Service Subscription and Offloading Control Framework} \label{sec:framework}
To meet the stringent deadline requirements of real-time tasks and address the dynamic network conditions in VEC, this section presents a framework for online service subscription and offloading control. This framework defines the complete life-cycle of these tasks as vehicles move through the VEC coverage area, including service request registration, request scheduling by solving $\mathbf{P}$, and offloading control. A detailed workflow of this framework is illustrated in Fig. \ref{fig:framework1}.

\subsection{Service Request Registration and Scheduling}
\textit{Service Request Registration.} When the vehicle $v_i$ of a task $n_i$ enters the VEC coverage area ($t_1$ in Fig. \ref{fig:framework1}), $v_i$ sends a service request for $n_i$ to its closest RSU, which then forwards the request to the scheduler. The service request includes meta-information about task $n_i$, such as data size, period, required service type, and utility function. Upon receiving the request, the scheduler adds it to the request pool, where it awaits processing during the next scheduling cycle; the request is removed from the pool when the vehicle leaves the VEC coverage area. Until the vehicle receives an offload grant from an RSU, job instances of the task are executed locally.

\textit{RSU Status Reporting.} Before a scheduling cycle starts, each vehicle sends SRS to all nearby RSUs ($t_2$ in Fig. \ref{fig:framework1}). Upon receiving the SRS, the 5G module of an RSU ($m_k$) selects an appropriate MCS to ensure reliable wireless data transmission. Based on the chosen MCS, the 5G module determines the data offloading rate between the vehicle and the RSU ($\mu_{ik}$) and forwards this information to the RSU server. The RSU server then appends the current server resource usage information ($C_k, B_k$) and transmits it to the scheduler. If the data offloading rate between a vehicle $v_i$ and an RSU $m_k$ is not updated to the scheduler before scheduling begins, $m_k$ is considered inaccessible to $v_i$, i.e., $v_i \notin \mathcal{M}_i$.

\textit{Request Scheduling.} The scheduler executes the scheduling algorithm periodically to determine service deployment and resource allocation for requests waiting in the request pool ($t_3$ in Fig. \ref{fig:framework1}). Note that the scheduler is only invoked once at the beginning of each scheduling cycle. After scheduling, an RSU grant (including service deployment and resource allocation) for each admitted task is sent to the corresponding RSU ($t_4$ in Fig. \ref{fig:framework1}). Upon receiving an RSU grant, the RSU $m_k$ initializes the service on its server. Once the service is initialized, the RSU grant is forwarded to the RSU’s 5G module, which allocates the RBs to the corresponding task and issues an offload grant to the respective vehicle ($t_5$ in Fig. \ref{fig:framework1}). 
%Given that the task period (or deadline) is generally short (typically less than $100$ ms), the execution time of the scheduling algorithm and the service initialization delay become significant relative to the execution window of a single job instance—and may even exceed them. If the scheduler assigns resources to individual job instances, each instance must wait for the scheduling decision before its vehicle determines whether to offload the job, making deadline compliance particularly challenging. Our framework schedules periodic tasks instead, and 
In our framework, job instances of a task only start offloading when an offload grant is received (i.e., the service is ready on an RSU) and the grant remains alive, \textit{eliminating the impact of scheduling algorithm execution and service initialization on job instance execution window}.
%Notably, since our framework accounts for the execution overhead of the scheduling algorithm and the initialization overhead of services, longer scheduling and initialization times can delay the issuance of offload grants to vehicles, ultimately reducing energy savings.

\textit{Scheduling Mode.} Our framework supports two different scheduling modes. In the first mode, all running services on RSUs are terminated before the next scheduling cycle begins, and the scheduler reschedules all the service requests using the full capacity of RSUs; we refer to this mode as \textit{SchedAll}. In the second mode, only unscheduled service requests are scheduled using the remaining RSU resources; we refer to this mode as \textit{SchedRemain}. 
%Compared to the \textit{SchedRemain} mode, the \textit{SchedAll} mode adjusts service deployments more frequently, potentially improving resource utilization efficiency in a dynamic environment at the cost of service continuity. Additionally, it introduces extra waiting time for new offload grants due to scheduling algorithm execution and service initialization overheads. 
Depending on the VEC requirement, either mode—or even a hybrid approach—can be applied.
%to balance resource utilization efficiency and service continuity.

%. However, this comes at the cost of service continuity, as active services are terminated before each scheduling cycle. 

%Compared to \textit{SchedRemain} mode, \textit{SchedAll} mode frequently adjusts service deployment in response to environmental changes, potentially improving resource allocation efficiency. However, it also sacrifices service continuity by terminating active services before each scheduling cycle and incurs additional waiting time for offloading admission due to scheduling algorithm execution and service initialization. Depending on the specific VEC network conditions, either mode—or even a hybrid approach—can be selected.

\subsection{Online Offloading Control}
In VEC, vehicle movement and environmental changes lead to dynamically changing wireless channel quality, making it challenging to consistently meet job instance deadlines under predetermined resource allocations. To address this issue, we propose an online offloading control protocol for individual job instances based on real-time SRS feedback. Since variations in service execution time on servers are significantly smaller than fluctuations in wireless channel quality, our control protocol primarily focuses on ensuring that job offloading can be completed within its maximum allowable offloading time (i.e., job deadline minus service execution time). The details of the online offloading control protocol are shown in Fig. \ref{fig:framework2}.

Each time a vehicle sends SRS to an RSU, the RSU’s 5G module selects an appropriate MCS to ensure reliable wireless data transmission. Based on the selected MCS, the data offloading rate (per RB per second) between the vehicle and the RSU is computed. Upon receiving an RSU grant after service initialization ($t_5$ in Fig. \ref{fig:framework1}), the 5G module of an RSU verifies whether the \textit{offloading requirement} is satisfied, i.e., whether the job can be offloaded within its maximum allowable offloading time given the current data offloading rate and allocated RBs. If the requirement is met, an offload grant is sent to the vehicle; otherwise, the 5G module suspends the offloading and waits for the next SRS. By configuring the SRS feedback interval to be shorter than the task period (e.g., $1\sim10$ ms), the system can ensure timely updates of channel quality in response to the dynamic nature of VEC networks.

\begin{figure}
    \centering
    \includegraphics[width=\linewidth, page=2]{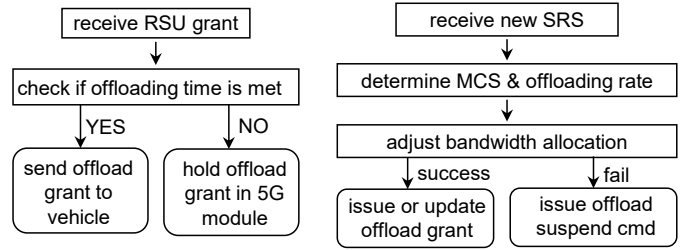}
    \caption{Online offloading control in RSU's 5G module
    %\aea{Naming of commands between RSU/vehicle/scheduler should be consistent. It keeps changing throughout the paper. Why? What is NIC? Never explained.}
    }
    \label{fig:framework2}
\end{figure}

During the job offloading phase, each vehicle periodically sends SRS to each RSU hosting its services. Upon receiving new SRS ($t_6$ in Fig. \ref{fig:framework1}), the RSU’s 5G module determines the updated MCS and data offloading rate. For each previously issued offload grant, the 5G module verifies whether the offloading requirement can still be met with the current RB allocation and the new data offloading rate (Fig. \ref{fig:framework2}). If so, no adjustments are needed. Otherwise, the 5G module increases the allocated RBs by assigning the unallocated RBs to the task until the offloading requirement is met. If the available unallocated RBs are insufficient, bandwidth adjustment fails; in this case, the 5G module pauses the grant by sending an offload suspend command to the vehicle ($t_7$ in Fig. \ref{fig:framework1}) and waits for the next SRS feedback. Otherwise, an offload grant is sent to the vehicle to resume offloading or to update its RB allocation. Whenever a job instance of a task is generated, the vehicle offloads the job if an offload grant has been received and is not suspended. \textit{This mechanism ensures the job execution window is not affected by SRS feedback overhead.} 
%To account for fluctuations in VEC network quality, we opt to suspend job offloading rather than terminating the service, preventing overreaction to temporary channel quality fluctuations. However, considering the case of persistent connection degradation, in the \textit{SchedRemain} mode, we terminate all services with suspended offload grants to facilitate resource reallocation before the next scheduling cycle.

In this framework, local execution must remain feasible for each task, as RSUs may not always be accessible. Before initiating a new real-time task on a vehicle, the feasibility of local execution must be verified. If local execution is not feasible for the new task, the task is either put on hold or the periods of currently running tasks are adjusted. Moreover, to account for potential processing failures, even when offloading succeeds, this framework focuses on two types of time-critical tasks. The first type includes safety-critical applications (e.g., anomaly detection) that require strict deadline compliance. For these tasks, offloading and local execution are performed concurrently. Local execution acts as a safety fallback in case offloading fails, while successful offloading improves accuracy or responsiveness. In this context, the utility function may reflect gains in accuracy or safety margins. The second type permits occasional deadline violations, modeled using an $M$$-$$K$ constraint, where at least $M$ out of every $K$ consecutive jobs of a task must meet their deadlines. In this case, a job instance is offloaded only if the $M$$-$$K$ condition has already been satisfied.

%\textit{CQI Feedback.} During the job offloading phase, the CQI feedback period is typically much shorter than the task period (e.g., every $1 \sim 10$ms). This ensures timely channel status updates in response to VEC’s highly dynamic network conditions, making online offloading control feasible and realistic for individual job instances of tasks. Additionally, since channel status feedback occurs between the 5G modules of vehicles and RSUs, and bandwidth allocation adjustments are lightweight, the overhead of offloading control is minimal. In the \textit{SchedRemain} mode, when the RSU $m_k$ hosting task $n_i$'s service becomes inaccessible to the vehicle $v_i$, the scheduler terminates the service for $n_i$ before the next scheduling cycle begins, and $n_i$'s request will be rescheduled.

\begin{figure}
    \centering
    \includegraphics[width=\linewidth, page=3]{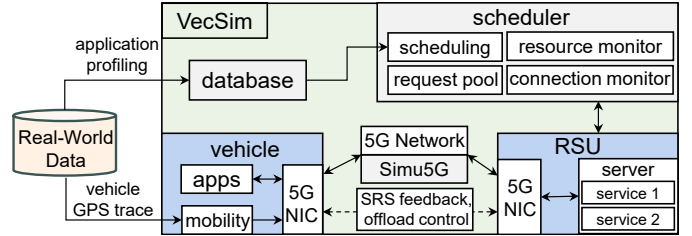}
    \caption{The architecture of the VEC simulator, VecSim.
    %VecSim: VEC simulator for real-time service subscription and online offloading control
    %\aea{change admission control in figure to offloading control}
    }
    \label{fig:framework3}
\end{figure}

\subsection{Framework Implementation}
Building a real testbed for a 5G-enabled VEC system is both technically challenging and costly, making simulation an essential tool for VEC research. Existing simulators lack support for joint bandwidth and computational resource management for real-time applications and online offloading control. To address these issues, we develop a VEC simulator, VecSim\footnote{The source code is available at \url{https://github.com/gaochuanchao/mecRT}}.
%, a novel VEC simulator that integrates our proposed framework. 

%However, existing edge and cloud simulators either lack 5G network simulation (e.g., EdgeCloudSim \cite{sonmez2017edgecloudsim}) or do not include an orchestrator capable of jointly managing bandwidth and computational resources to meet the stringent requirements of real-time applications (e.g., Simu5G \cite{nardini2020simu5g}). Moreover, existing simulators lack support for real-time service subscriptions and online offloading control for VEC. To address these issues, we develop VecSim, a novel VEC simulator that integrates our proposed framework. 
% \footnote{The VEC simulator will be open-sourced upon publication of this manuscript.}

The architecture of VecSim is illustrated in Fig. \ref{fig:framework3}. VecSim requires two primary inputs: vehicle GPS traces and application profiling data. GPS traces can be either generated using SUMO \cite{SUMO2018} or collected from real-world vehicle trajectories. These traces are used by VecSim’s mobility module to control vehicle movement within the simulation.
%These traces are processed by VecSim’s mobility module, which manages vehicle movement within the simulated environment. 
The application profiling data is stored in the database module, which is accessed by the scheduler during resource scheduling.
%such as execution times for different service types, which are accessed by the scheduler during resource scheduling. 
For network simulation, VecSim leverages the 5G Network Interface Card (NIC) library from Simu5G to manage data transmissions and enable online channel quality estimation in 5G networks. Additionally, we extend the original NIC modules to implement an online offloading control protocol. The RSU server manages computational resources in the RSU and is responsible for service initialization and termination based on the commands from the scheduler. The scheduler module plays a central role in resource orchestration. It maintains (i) a request pool that stores service requests from vehicles within the VEC coverage area, (ii) a resource monitor that tracks the available resources of RSUs (iii) a connection monitor that records the latest data offloading rates between vehicles and RSUs, and (iv) a scheduling module that periodically allocate resource for pending service requests. Since VecSim is built on top of OMNeT++, it benefits from a graphical simulation interface, allowing visualization of vehicle movements, data transmissions, and resource usage during VEC simulation.

% ============  Approximation Algorithm =============
\section{Separate Assignment Rounding Algorithm}\label{sec:algorithm}
This section presents the Separate Assignment Rounding Algorithm ($\mathtt{SARound}$) for problem $\mathbf{P}$ formulated in Section \ref{sec:system-architecture}. $\mathtt{SARound}$ is executed periodically in the framework presented in Section~\ref{sec:framework} to allocate RSU resources to pending service requests. It leverages a recursive framework, where each recursive layer employs an LP rounding method to determine the service deployment and resource allocation for exactly one RSU. Later in this section, we formally prove that $\mathtt{SARound}$ is a $\frac{1}{4}$-approximation algorithm for $\mathbf{P}$. 

\subsection{ILP Formulation}
\begin{definition}[Service Instance] \label{definition:instance}
    A service instance of task $n_i$ is defined as $\ell \triangleq \left< n_\ell, m_\ell, b_\ell, c_\ell \right>$, representing that the service for task $n_\ell$ ($n_\ell = n_i$) is deployed on RSU $m_\ell$, and is allocated $b_\ell$ RBs for task offloading and $c_\ell$ CUs for task processing. Moreover, $\ell$ needs to satisfy (i) the RSU access constraint (i.e., $m_\ell \in \mathcal{M}_i$), (ii) the resource constraint (i.e., $b_\ell \le B_\ell$ and $c_\ell \le C_\ell$), and (iii) the deadline constraint (i.e., Eq. \eqref{eq:system2}).
\end{definition}

\textit{Solving $\mathbf{P}$ directly is challenging due to the generalized utility function $u_i$ and the nonlinear deadline constraint \eqref{eq:p1a}.} Therefore, we first reformulate $\mathbf{P}$ as an equivalent Integer Linear Programming (ILP) problem, denoted as $\mathbf{P_{ILP}}$. Let $\ell$ be a service instance of task $n_i$ (as in Definition \ref{definition:instance}). Suppose $m_\ell=m_k$, i.e., $x_{ik}=1$. Given $x_{ik}$, $b_\ell$, $c_\ell$, the corresponding utility value $u_i(x_{ik},b_\ell, c_\ell)$ can be determined. For simplicity, we denote this utility value as $u(\ell)$.
%Let $e(\ell)$ be the energy saving (per second) for vehicle $v_i$ when $n_i$ is served following $\ell$, which can be computed based on Eq. \eqref{eq:system1}. 
Since resource allocations are discrete, we enumerate all feasible service instances for each task. Let $\mathcal{L}$ be the set of all feasible service instances for all tasks in $\mathcal{N}$, $\mathcal{L}_i \subseteq \mathcal{L}$ be the set of all service instances for task $n_i$, and $\mathcal{L}^k \subseteq \mathcal{L}$ be the set of all service instances related to RSU $m_k$. For any service instance $\ell$ and task $n_{\ell}$, we use $\mathcal{L}(\ell)$ to denote the set of all service instances belonging to the same task $n_\ell$ (i.e., when $n_{\ell}=n_i$, $\mathcal{L}(\ell)=\mathcal{L}_i$). The VEC comprises $N$ tasks and $M$ RSUs, and each RSU has at most $B$ RBs and $C$ CUs. Therefore, $\mathcal{L}$ contains at most $NMBC$ service instances. For each service instance $\ell \in \mathcal{L}$, let $z_\ell \in \{0,1\}$ be the selection variable of $\ell$, where $z_\ell = 1$ if and only if $\ell$ is selected as the service instance for task $n_\ell$ in the solution. Then, we reformulate $\mathbf{P}$ as an ILP problem $\mathbf{P_{ILP}}$ as follows.
\begin{subequations}
    \allowdisplaybreaks
    \begin{align}
        (\mathbf{P_{ILP}}) \ \ \max \textstyle \sum\nolimits_{\ell \in \mathcal{L}} & u(\ell) \cdot z_\ell \label{eq:p2} \\
        \text{subject to:} \hspace{0.5cm} \textstyle \sum\nolimits_{\ell \in \mathcal{L}^k} b_\ell \cdot z_\ell \le B_k,& \ \forall m_k \in \mathcal{M} \label{eq:p2a} \\
        \textstyle \sum\nolimits_{\ell \in \mathcal{L}^k} c_\ell \cdot z_\ell \le C_k,& \ \forall m_k \in \mathcal{M} \label{eq:p2b} \\
        \textstyle \sum\nolimits_{\ell \in \mathcal{L}_i} z_\ell \le 1,& \ \forall n_i \in \mathcal{N} \label{eq:p2c} \\
        z_\ell \in \{0,1\},& \ \forall \ell \in \mathcal{L} \label{eq:p2d}
    \end{align}
\end{subequations}
Eqs. \eqref{eq:p2a} and \eqref{eq:p2b} are the resource constraints. Eq. \eqref{eq:p2c} ensures at most one service instance can be selected for each task. Note that the deadline constraints in Eq. \eqref{eq:p1a} and RSU access constraints in Eq. \eqref{eq:p1d} of $\mathbf{P}$ have been implicitly satisfied via the definition of service instances (Definition \ref{definition:instance}). 
As an ILP problem, solving $\mathbf{P_{ILP}}$ optimally requires exponential time, which is impractical for dynamic VEC systems. To balance solution quality and runtime efficiency, we propose an approximation algorithm for $\mathbf{P_{ILP}}$ in Subsection \ref{subsec:saround}, which yields a provably suboptimal solution with polynomial runtime complexity.
%Based on the definition of multiple-choice multidimensional knapsack problem \cite{PATTSHAMIR20121591}, $\mathbf{P_{ILP}}$ is an equivalent Multiple-Choice Two-Dimensional Multiple Knapsack Problem (MC2DMKP).

\subsection{Separate Assignment Rounding Algorithm (SARound)}\label{subsec:saround}
This subsection introduces our proposed approximation algorithm $\mathtt{SARound}$ (Algorithm \ref{alg:sa-round}). $\mathtt{SARound}$ is a recursive algorithm, initialized with the call $\mathtt{SARound}$($1$, $\mathbf{u}$, $\mathcal{L}$), where $\mathbf{u}$ is a vector of $u(\ell)$ for all $\ell \in \mathcal{L}$. At each recursive layer, $\mathtt{SARound}$ updates the utility value of remaining service instances.
%$\mathtt{SARound}$ is a recursive algorithm that updates the utility value of each service instance at every recursive layer. 
%(the updated utility can be considered as the marginal utility corresponding to the existing selected instances). 
For ease of presentation, we denote the \textit{updated utility value} of service instance $\ell$ in the $k$-th layer as $w^k(\ell)$, with $u(\ell)$ representing the \textit{original utility value} of $\ell$ (i.e., $u_i(x_{ik},b_\ell, c_\ell)$). Let $\mathbf{w^k}$ be the vector of $w^k(\ell)$ for all $\ell$. For any service instance set $\mathcal{S}$, we define $w^k(\mathcal{S}) = \sum_{\ell \in \mathcal{S}} w^k(\ell)$.

Let $\mathbf{P_{ILP}^k}$ represent the ILP subproblem of $\mathbf{P_{ILP}}$ when restricted to a single RSU $m_k$ and the utility vector $\mathbf{w^k}$. In the $k$-th recursive layer of $\mathtt{SARound}$, we invoke function $\mathtt{FloorRd}$ (lines $7$--$11$) to determine the solution for $\mathbf{P_{ILP}^k}$ (i.e., the service deployment on RSU $m_k$). As $\mathbf{P_{ILP}^k}$ is still an integer problem, solving it optimally takes exponential time. Therefore, in $\mathtt{FloorRd}$, we apply an LP rounding technique to obtain a sub-optimal solution for $\mathbf{P_{ILP}^k}$ with a more practical time complexity. We first relax $\mathbf{P_{ILP}^k}$ into an LP problem, denoted as $\mathbf{P_{LP}^k}$, by relaxing the binary variable $z_\ell$ into a continuous variable of range $[0,1]$. Given the utility vector $\mathbf{w^k}$, the relaxed LP problem $\mathbf{P_{LP}^k}$ for single RSU $m_k$ can be defined as follows.
% \begin{equation}\label{eq:rp2m}
%     (\mathbf{P_{LP}^k}) \ \ \max { \sum_{\ell \in \mathcal{L}^k} w^k(\ell) \cdot z_\ell}
% \end{equation}
% subject to:
% \addtocounter{equation}{-1}
\begin{subequations}
    \allowdisplaybreaks
    \begin{align}
        (\mathbf{P_{LP}^k}) \ \ \max \textstyle \sum\nolimits_{\ell \in \mathcal{L}^k} w^k(\ell) \cdot z_\ell & \label{eq:rp2m} \\
        \text{subject to:} \hspace{0.5cm} \textstyle \sum\nolimits_{\ell \in \mathcal{L}^k} b_\ell \cdot z_\ell \le B_k & \label{eq:rp2ma} \\
        \textstyle \sum\nolimits_{\ell \in \mathcal{L}^k} c_\ell \cdot z_\ell \le C_k & \label{eq:rp2mb} \\
        \textstyle \sum\nolimits_{\ell \in \mathcal{L}_i\cap\mathcal{L}^k} z_\ell \le 1, \forall n_i \in \mathcal{N} & \label{eq:rp2mc} \\
        z_\ell \ge 0, \forall \ell \in \mathcal{L}^k & \label{eq:rp2md}
    \end{align}
\end{subequations}
Note the upper limit for $z_\ell$ (i.e., 1) is specified by Eq. \eqref{eq:rp2mc}. The ILP problem $\mathbf{P_{ILP}^k}$ has the same formulation as $\mathbf{P_{LP}^k}$ except that the feasible domain of $z_\ell$ is $\{0,1\}$.
$\mathtt{FloorRd}$ first computes (e.g., using the simplex method) an \textit{optimal basic solution}, $\mathbf{z}^*$, for the LP problem $\mathbf{P_{LP}^k}$ (line $8$). Note, solving an LP problem with $n$ variables has time complexity $\mathcal{O}(n^3)$ \cite{vaidya1987an}. Because of potential fractions (i.e., $0 < z_\ell^* < 1$ for some $\ell \in \mathcal{L}^k$) in the solution for $\mathbf{P_{LP}^k}$, it may not be a feasible solution for $\mathbf{P_{ILP}^k}$. Therefore, we \textit{round down} all fractional $z_\ell^*$ to obtain a fully integral solution $\mathcal{I}$ (line $9$), i.e., all resulting $z_\ell^*$ after rounding is either $0$ or $1$. As constraints \eqref{eq:rp2ma} to \eqref{eq:rp2mc} are still satisfied after removing fractional $z_\ell^*$, $\mathcal{I}$ is a feasible solution for $\mathbf{P_{ILP}^k}$. Then, $\mathcal{I}$ is compared with the service instance $\ell \in \mathcal{L}^k$ having the largest utility $w^k(\ell)$, and the one with higher total utility is returned as the solution for RSU $m_k$ (lines $10$--$11$). This comparison is to ensure that the solution quality is lower bounded when the loss due to rounding is large.

\begin{algorithm}[tb]
    \caption{$\mathtt{SARound}$}\label{alg:sa-round}
    
    % \SetAlgoHangIndent{10pt} 
    \SetKwFunction{Round}{$\mathtt{FloorRd}$}
    \SetKwFunction{SARound}{$\mathtt{SARound}$}
    \SetKwProg{Fn}{Function}{:}{}
    \SetKwProg{Main}{}{:}{}
    \SetKwInOut{Input}{input}
    \SetKwInOut{Output}{output}
    \SetKw{Return}{return}

    \Main{\SARound{$k$, $\mathbf{w^k}$, $\mathcal{L}$}}{
    $\mathcal{F}^k \leftarrow$ \Round{$\mathbf{w^k}$, $\mathcal{L}^k$}\;
    Decompose utility vector $\mathbf{w^k} = \mathbf{w_1^k} + \mathbf{w_2^k}$ such that
    \begin{equation*}
        \forall \ell' \in \mathcal{L}, w_1^k(\ell') = 
        \begin{cases}
            w^k(\ell') & \text{if } \ell' \in \mathcal{L}^k\\
            w^k(\ell) & \text{if }  \scaleto{\exists \ell \in \mathcal{F}^k, \ell' \in \mathcal{L}(\ell)\setminus \mathcal{L}^k}{10pt}\\
            0 & \text{otherwise}
        \end{cases}
    \end{equation*}
    
    \textbf{if} $k < M$ \textbf{then} $\mathcal{S}^{k+1} \leftarrow $ \SARound{$k+1$, $\mathbf{w_2^k}$, $\mathcal{L}$} \textbf{else} $\mathcal{S}^{k+1} \leftarrow \emptyset$\;
    Remove any $\ell$ from $\mathcal{F}^k$ if exist $\ell' \in \mathcal{L}(\ell) \cap \mathcal{S}^{k+1}$\;
    \Return $\mathcal{S}^{k} \leftarrow \mathcal{S}^{k+1} \cup \mathcal{F}^k$.
    }
    % Define Function PartEnum
    \BlankLine
    \Fn{\Round{$\mathbf{w^k}, \mathcal{L}^k$}}{
    Let $\mathbf{z}^*$ be an \textit{optimal basic solution} of $\mathbf{P_{LP}^k}$\;
    $z_\ell \leftarrow \lfloor z^*_\ell \rfloor$ for all $\ell \in \mathcal{L}^k$; $\mathcal{I} \leftarrow \{\ell \in \mathcal{L}^k | z_\ell = 1\}$\;
    Let $\ell_{max}$ be the $\ell \in \mathcal{L}^k$ with a largest $w^k(\ell)$)\;
    %\textbf{if} $w^k(\mathcal{I}) < w^k(\ell_{max})$, \Return $\ell_{max}$; \textbf{else}, \Return $\mathcal{I}$\;
    \scalebox{0.97}{\leIf{$w^k(\mathcal{I}) < w^k(\ell_{max})$}{\Return $\ell_{max}$}{\Return $\mathcal{I}$}}
    }
\end{algorithm}

At each recursive layer of $\mathtt{SARound}$, the solution obtained for RSU $m_k$ by $\mathtt{FloorRd}$ (i.e., $\mathcal{F}^k$) is used to update the utility vector $\mathbf{w^k}$ (line $3$) for service instances that can be affected (i.e., related to the same RSU $m_k$ or belong to the same task as instances in $\mathcal{F}^k$). Specifically, the utility $w^k(\ell)$ of each service instance $\ell$ is decomposed into two parts, $w_1^k(\ell)$ and $w_2^k(\ell)$. The vector $\mathbf{w_2^k}$ can be interpreted as the ``marginal gain'' of the remaining service instances corresponding to the selected service instances in $\mathcal{F}^k$, which is used as the utility vector $\mathbf{w^{k+1}}$ for the next recursive layer. Let $\mathcal{S}^{k+1}$ be the solution returned from the next recursive layer (line $4$). To satisfy constraint \eqref{eq:p2c}, an instance $\ell$ in $\mathcal{F}^k$ is excluded if any other service instance $\ell'$ belonging to the same task as $\ell$ (i.e., $\ell' \in \mathcal{L}(\ell)$) already exists in  $\mathcal{S}^{k+1}$. The returned $\mathcal{S}^k$ is then defined as the union of the updated $\mathcal{F}^k$ and $\mathcal{S}^{k+1}$ (lines $5$–$6$).

\textbf{Time Complexity.} In each call to $\mathtt{FloorRd}$, there are at most $NBC$ service instances in $\mathcal{L}^k$. The time complexity of determining $\ell_{max}$ is $\mathcal{O}(NBC)$. The time complexity for solving the LP problem $\mathbf{P_{LP}^k}$ with $NBC$ variables is $\mathcal{O}((NBC)^3)$. Therefore, at each recursive layer of $\mathtt{SARound}$, the time complexity of calling $\mathtt{FloorRd}$ (line $2$) is $\mathcal{O}((NBC)^3)$, while decomposing the utility values (line $3$) requires $\mathcal{O}(MNBC)$. Since $\mathtt{SARound}$ has $M$ recursive layers, its overall time complexity is $\mathcal{O}(M^2(NBC)^3)$, which is polynomial in the input size of $\mathbf{P_{ILP}}$. However, because the complexity depends on the values of $B$ and $C$, $\mathtt{SARound}$ is a pseudo-polynomial-time algorithm for $\mathbf{P}$. In practice, though, $B$ and $C$ are typically small. For example, a 5G communication channel generally provides up to $275$ RBs~\cite{ali20185g}, and an RTX 4090 GPU can be partitioned into at most $32$ instances using Nvidia's Green Contexts feature~\cite{nvidia2025green}. Therefore, $B$ and $C$ can be regarded as constants in real-world systems, making $\mathtt{SARound}$ effectively a polynomial-time algorithm for $\mathbf{P}$.

In the remainder of this subsection, we first show that $\mathtt{FloorRd}$ is a $\frac{1}{3}$-approximation algorithm for $\mathbf{P_{ILP}^k}$ when restricted to a single RSU $m_k$ and utility vector $\mathbf{w^k}$. We then prove that $\mathtt{SARound}$ achieves a $\frac{1}{4}$-approximation for the original problem $\mathbf{P_{ILP}}$, or equivalently $\mathbf{P}$.

\begin{theorem}\label{theorem:floorRd}
    Function $\mathtt{FloorRd}$ is a $\frac{1}{3}$-approximation algorithm for the ILP problem $\mathbf{P_{ILP}^k}$.
\end{theorem}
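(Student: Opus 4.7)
The plan is to bound the loss incurred by $\mathtt{FloorRd}$ against the LP optimum of $\mathbf{P_{LP}^k}$, which upper-bounds the ILP optimum of $\mathbf{P_{ILP}^k}$. Let $OPT_{LP}^k$ and $OPT_{ILP}^k$ denote these two optima, and let $\mathbf{z}^*$ be the optimal basic solution computed in line 8. Partition the support of $\mathbf{z}^*$ into the integer part $\mathcal{I} = \{\ell \in \mathcal{L}^k : z_\ell^* = 1\}$ (exactly the set returned after the floor in line 9) and the fractional part $\mathcal{R} = \{\ell \in \mathcal{L}^k : 0 < z_\ell^* < 1\}$. Then $OPT_{LP}^k = w^k(\mathcal{I}) + \sum_{\ell \in \mathcal{R}} w^k(\ell)\, z_\ell^*$, and my goal reduces to showing the fractional mass is at most $2\, w^k(\ell_{max})$.

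The key structural lemma I would prove first is: at most two tasks can have any fractional variable in a basic optimal solution of $\mathbf{P_{LP}^k}$. To establish this, I would count tight constraints at the vertex $\mathbf{z}^*$. A basic feasible solution satisfies at least $|\mathcal{L}^k|$ linearly independent constraints with equality. The candidate tight constraints are (i) the non-negativity constraints for $\ell$ with $z_\ell^* = 0$, (ii) the at most $N$ task constraints from \eqref{eq:rp2mc} that are tight, and (iii) at most the two resource constraints \eqref{eq:rp2ma}--\eqref{eq:rp2mb}. Letting $T_F$ denote the set of tasks with at least one fractional $z_\ell^*$, I would partition $T_F$ into those whose task constraint is tight (which forces at least two fractional variables on that task, since one fractional value cannot equal $1$) and those whose task constraint is slack. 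Substituting the corresponding inequalities $|\mathcal{R}| \ge 2|T_{F,\text{tight}}| + |T_{F,\text{slack}}|$ and the tight-constraint count $|I_0| + |I_1| + |T_{F,\text{tight}}| + 2 \ge |\mathcal{L}^k|$ into each other gives $|T_F| \le 2$.

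Given this lemma, the rest is algebra. Using $w^k(\ell) \le w^k(\ell_{max})$ for every $\ell \in \mathcal{L}^k$ together with the task constraint $\sum_{\ell \in \mathcal{L}_i \cap \mathcal{L}^k} z_\ell^* \le 1$ for each $i \in T_F$, I obtain
\begin{equation*}
\sum_{\ell \in \mathcal{R}} w^k(\ell)\, z_\ell^* \;\le\; w^k(\ell_{max}) \sum_{i \in T_F} \sum_{\ell \in \mathcal{R}\cap\mathcal{L}_i} z_\ell^* \;\le\; w^k(\ell_{max})\, |T_F| \;\le\; 2\, w^k(\ell_{max}).
\end{equation*}
Therefore $OPT_{LP}^k \le w^k(\mathcal{I}) + 2\, w^k(\ell_{max}) \le 3 \max\{w^k(\mathcal{I}),\, w^k(\ell_{max})\}$. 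Since $\mathtt{FloorRd}$ returns whichever of $\mathcal{I}$ and $\{\ell_{max}\}$ has larger updated utility and both are feasible for $\mathbf{P_{ILP}^k}$, its output has value at least $\tfrac{1}{3} OPT_{LP}^k \ge \tfrac{1}{3} OPT_{ILP}^k$.

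The main obstacle will be the careful bookkeeping in the structural lemma: one has to rule out the case that a task simultaneously carries an integer variable at $1$ and a fractional variable (immediate from the task constraint being violated otherwise), and handle degeneracy of the basic solution without losing the count. Once $|T_F| \le 2$ is established the approximation bound drops out cleanly, and the argument is oblivious to the specific form of $w^k$, so it applies at every recursion layer of $\mathtt{SARound}$ exactly as needed for the subsequent $\tfrac{1}{4}$-approximation result for $\mathbf{P_{ILP}}$.
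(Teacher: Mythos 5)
Your proposal is correct and follows essentially the same route as the paper's proof: both hinge on the sparsity of the fractional support at a basic optimal solution of $\mathbf{P_{LP}^k}$ (at most two tasks carry fractional variables, so the fractional mass is at most $2\,w^k(\ell_{max})$ via constraint \eqref{eq:rp2mc}), and then compare $\mathcal{I}$ with $\ell_{max}$ against the LP upper bound to get the factor $\frac{1}{3}$. The only difference is cosmetic—you count tight constraints at the vertex in the original variable space, while the paper counts the at most $N+2$ positive basic variables in slack form—which are equivalent statements of the same LP fact.
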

\begin{proof}
    $\mathtt{FloorRd}$ first solves the relaxed LP,  $\mathbf{P_{LP}^k}$, and then rounds the fractional solution to derive an integral solution for $\mathbf{P_{ILP}^k}$. We prove this theorem by showing that the loss due to rounding can be bounded.
    In a \textit{basic feasible solution} of an LP problem with $n$ functional constraints (excluding non-negative constraints), there can be at most $n$ variables taking positive values, termed basic variables, and the rest of the variables take value $0$, termed non-basic variables \cite{hillier2015introduction}. The LP problem $\mathbf{P_{LP}^k}$ comprise $(N+2)$ functional constraints ($2$ for constraints \eqref{eq:rp2ma} and \eqref{eq:rp2mb}, and $N$ for constraint \eqref{eq:rp2mc}). Thus, \textit{an optimal basic solution, which is also a basic feasible solution, of $\mathbf{P_{LP}^k}$ has at most $(N+2)$ basic variables that take positive values}.
    
    % An LP problem with $n$ functional constraints (excluding non-negative constraints) has at most $n$ basic variables that can take positive values in its \textit{basic feasible solution}, and all non-basic variables take value $0$ \cite{hillier2015introduction}. The LP problem $\mathbf{P_{LP}^k}$ includes $(N+2)$ equations in its functional constraints ($2$ equations for constraints \eqref{eq:rp2ma} and \eqref{eq:rp2mb}, and $N$ equations for constraint \eqref{eq:rp2mc}).
    % Thus, \textit{an optimal basic solution, which is also a basic feasible solution, of $\mathbf{P_{LP}^k}$ has at most $(N+2)$ basic variables} that can take positive values, while the other non-basic variables take the value zero.
    
    The slack form of each equation in constraint \eqref{eq:rp2mc} is given by $\sum_{\ell \in \mathcal{L}_i\cap \mathcal{L}^k} z_\ell + s_i = 1, \forall n_i \in \mathcal{N}$, where $s_i$ is the slack variable. Note that the slack variable can be a basic variable if it takes a positive value in a basic feasible solution, even though it contributes $0$ to the objective function. In each of these $N$ equations, the left-hand side sums to $1$; thus, each equation must contain at least one basic variable with a positive value.
    In an optimal basic solution of $\mathbf{P_{LP}^k}$ (i.e., $\mathbf{z}^*$), if a fractional basic variable ($0 < z_\ell^* < 1$) appears in any of these $N$ equations, there must be at least one additional fractional basic variable in the same equation, since the variables in that equation sum to $1$. In the worst-case scenario of maximum rounding loss, there are at most two such equations, each containing two basic variables with positive \textit{fractional} values. This results in at most four basic variables, associated with two tasks, having values in the range $(0,1)$. Otherwise, more than $(N+2)$ variables would take positive values, which violates the properties of a basic feasible solution in LP problems. Thus, the rounding process in line $9$ of $\mathtt{SARound}$ removes at most four service instances of two tasks. Let $\ell_1, \ell_2, \ell_3$, and $\ell_4$ denote these removed service instances; suppose $\ell_1$ and $\ell_2$ belong to task $n_1$ (i.e., $\ell_1, \ell_2 \in \mathcal{L}_1$), and $\ell_3$ and $\ell_4$ belong to task $n_2$ (i.e., $\ell_3, \ell_4 \in \mathcal{L}_2$). Let
    \[\scaleto{W_1 = z^*_{\ell_1}w^k(\ell_1)  + z^*_{\ell_2}w^k(\ell_2) }{11pt} \text{ and } \scaleto{W_2 = z^*_{\ell_3}w^k(\ell_3)  + z^*_{\ell_4}w^k(\ell_4) .}{11pt}\] 
    According to the definition of $\ell_{max}$ in line $10$ of $\mathtt{SARound}$ and constraint \eqref{eq:rp2mc} (i.e., $z^*_{\ell_1} + z^*_{\ell_2} \le 1$ and $z^*_{\ell_3} + z^*_{\ell_4} \le 1$), we have $w^k(\ell_{max}) \ge \max\{W_{1}, W_{2}\}$. Thus, the total utility of the returned solution $\mathcal{F}^k$ (line $2$) can be expressed as:
    \begin{equation}\label{eq:theorem1b}
        \scaleto{w^k(\mathcal{F}^k) = \max\{w^k(\mathcal{I}), w^k(\ell_{max})\} \ge \max\{w^k(\mathcal{I}), W_1, W_2\}.}{10.3pt}
    \end{equation}
    Here, the $\mathcal{I}$ is the set of service instances obtained after rounding in line $9$ of $\mathtt{SARound}$.
    Let $\mathcal{Q}^k$ denote the optimal solution of the ILP problem $\mathbf{P_{ILP}^k}$. Since the optimal objective of $\mathbf{P_{LP}^k}$ serves as an upper bound for that of $\mathbf{P_{ILP}^k}$, we have $w^k(\mathcal{I}) + W_1 + W_2 = \sum_{\ell \in \mathcal{L}^k}w^k(\ell)z_\ell^* \ge w^k(\mathcal{Q}^k)$, which implies that at least one term in $\{w^k(\mathcal{I}), W_1, W_2\}$ is at least $\frac{1}{3}w^k(\mathcal{Q}^k)$; otherwise, the inequality does not hold. Combining this result with Eq. \eqref{eq:theorem1b}, we conclude that: 
    $w^k(\mathcal{F}^k) \ge \frac{1}{3}w^k(\mathcal{Q}^k)$.
\end{proof}

Based on Theorem \ref{theorem:floorRd}, we can derive the approximation ratio of $\mathtt{SARound}$ for problem $\mathbf{P_{ILP}}$ as follows.
\begin{theorem}\label{theorem:saRound}
    $\mathtt{SARound}$ is a $\frac{1}{4}$-approximation algorithm for the problem $\mathbf{P_{ILP}}$ (or equivalently $\mathbf{P}$).
\end{theorem}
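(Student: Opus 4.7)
The plan is to prove Theorem~\ref{theorem:saRound} by induction on the recursive layer $k$, proceeding from $k=M$ down to $k=1$. The inductive claim is that $\mathtt{SARound}(k,\mathbf{w^k},\mathcal{L})$ returns a set $\mathcal{S}^k$ with $w^k(\mathcal{S}^k) \ge \tfrac{1}{4}\,\mathrm{OPT}_k(\mathbf{w^k})$, where $\mathrm{OPT}_k(\mathbf{w^k})$ denotes the optimum of $\mathbf{P_{ILP}}$ restricted to RSUs $m_k,\dots,m_M$ under utility vector $\mathbf{w^k}$. Specializing to $k=1$ and $\mathbf{w^1}=\mathbf{u}$ then yields the theorem. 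The base case $k=M$ is immediate from Theorem~\ref{theorem:floorRd} since $\tfrac{1}{3}\ge\tfrac{1}{4}$.

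For the inductive step, I first establish the telescoping identity $w^k(\mathcal{S}^k) = w^k(\mathcal{F}^k) + w_2^k(\mathcal{S}^{k+1})$. This follows from $\mathcal{S}^k = \mathcal{F}^k_{\mathrm{final}} \cup \mathcal{S}^{k+1}$, where $\mathcal{F}^k_{\mathrm{final}}$ is the set after line~5 removes any $\ell \in \mathcal{F}^k$ conflicting with $\mathcal{S}^{k+1}$. Every $\ell'\in\mathcal{S}^{k+1}$ lies outside $\mathcal{L}^k$ (the recursive call operates only on $\mathcal{L}^{k+1},\dots,\mathcal{L}^M$, which are disjoint from $\mathcal{L}^k$), so $w^k(\ell')=w_1^k(\ell')+w_2^k(\ell')$. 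By the decomposition in line~3, $w_1^k(\ell')$ contributes $w^k(\ell)$ precisely when $\ell'$ shares a task with some $\ell\in\mathcal{F}^k$; summing these contributions over $\mathcal{S}^{k+1}$ gives exactly $w^k(R)$, where $R$ is the set removed in line~5, so $w^k(\mathcal{F}^k_{\mathrm{final}})+w_1^k(\mathcal{S}^{k+1})=w^k(\mathcal{F}^k)$.

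Next, let $\mathcal{O}$ be an optimal solution for the level-$k$ problem and split it as $\mathcal{O}_1=\mathcal{O}\cap\mathcal{L}^k$ and $\mathcal{O}_2=\mathcal{O}\setminus\mathcal{L}^k$. Since $\mathcal{O}_1$ is feasible for $\mathbf{P_{ILP}^k}$, Theorem~\ref{theorem:floorRd} yields $w^k(\mathcal{F}^k)\ge\tfrac{1}{3}w^k(\mathcal{O}_1)$. For $\mathcal{O}_2$, because each task contributes at most one instance to $\mathcal{O}_2$ and at most one to $\mathcal{F}^k$ (constraint~\eqref{eq:p2c}), the definition of $\mathbf{w_2^k}$ implies $w^k(\mathcal{O}_2)-w_2^k(\mathcal{O}_2) \le w^k(\mathcal{F}^k)$. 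Since $\mathcal{O}_2$ is feasible for the level-$(k{+}1)$ subproblem under $\mathbf{w_2^k}$, the inductive hypothesis gives $w_2^k(\mathcal{S}^{k+1})\ge\tfrac{1}{4}\bigl(w^k(\mathcal{O}_2)-w^k(\mathcal{F}^k)\bigr)$. Substituting into the telescoping identity:
\begin{equation*}
w^k(\mathcal{S}^k) \;\ge\; \tfrac{3}{4}\,w^k(\mathcal{F}^k) + \tfrac{1}{4}\,w^k(\mathcal{O}_2) \;\ge\; \tfrac{1}{4}\,w^k(\mathcal{O}_1) + \tfrac{1}{4}\,w^k(\mathcal{O}_2) \;=\; \tfrac{1}{4}\,w^k(\mathcal{O}),
\end{equation*}
which closes the induction.

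The main obstacle I anticipate is the careful bookkeeping for the decomposition $\mathbf{w^k}=\mathbf{w_1^k}+\mathbf{w_2^k}$: specifically, verifying that the ``opportunity-cost'' contributions of $\mathbf{w_1^k}$ on instances outside $\mathcal{L}^k$ are absorbed exactly by the conflict-removal step in line~5, producing the clean telescoping identity. A secondary subtlety is that $\mathbf{w_2^k}$ may contain negative entries, so one must argue that $\mathrm{OPT}_{k+1}(\mathbf{w_2^k}) \ge w_2^k(\mathcal{O}_2)$ still delivers a useful lower bound rather than a vacuous one; the factor $\tfrac{1}{4}$ emerges as the exact balance point where $\tfrac{3}{4}\cdot\tfrac{1}{3}=\tfrac{1}{4}$, indicating that no improvement is possible within this recursive decomposition scheme starting from a $\tfrac{1}{3}$-approximation at each RSU.
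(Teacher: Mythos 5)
Your proposal is correct and follows essentially the same route as the paper's proof: induction over the recursive layers of $\mathtt{SARound}$, the line-3 decomposition $\mathbf{w^k}=\mathbf{w_1^k}+\mathbf{w_2^k}$, Theorem~\ref{theorem:floorRd} supplying the per-RSU factor $\tfrac{1}{3}$, and a charging of the optimum's conflicting instances on other RSUs against $w^k(\mathcal{F}^k)$ to reach $\tfrac{1}{4}$. The only difference is bookkeeping: you state the invariant against the per-layer restricted optimum $\mathrm{OPT}_k(\mathbf{w^k})$ and telescope via $w^k(\mathcal{S}^k)=w^k(\mathcal{F}^k)+w_2^k(\mathcal{S}^{k+1})$, whereas the paper fixes one global optimum $\mathcal{Q}$ and bounds $w_1^k(\mathcal{Q})\le 4\,w^k(\mathcal{F}^k)$ and $w_2^k(\mathcal{Q})$ separately—an arithmetically equivalent variant that changes nothing essential (it slightly streamlines your base case).
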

\begin{proof}
    Let $\mathcal{Q}$ be an optimal solution for $\mathbf{P_{ILP}}$. We use \textit{simple induction} to prove this theorem by showing that $w^k(\mathcal{S}^k) \ge \frac{1}{4}w^k(\mathcal{Q})$ for $k = M, M-1, ..., 1$, where $\mathbf{w^1} = \mathbf{u}$.

    \textbf{Base Case $(k=M)$:} When $k=M$, $\mathcal{S}^M = \mathcal{F}^M$. From line $3$, for $k=1, ..., M$, we set $w_1^k(\ell') = w^k(\ell')$ for all $\ell' \in \mathcal{L}^k$; thus, $w_2^k(\ell') = 0$ for all $\ell' \in \mathcal{L}^k$ after decomposition. Therefore, when $k=M$, only service instance $\ell \in \mathcal{L}^M$ retain a positive utility value $w^k(\ell)$.
    By Theorem \ref{theorem:floorRd}, we have: 
    \[\textstyle{w^M(\mathcal{S}^M) = w^M(\mathcal{F}^M) \ge \frac{1}{3}w^M(\mathcal{Q}) \ge \frac{1}{4}w^M(\mathcal{Q}).}\]

    \textbf{Inductive Step $(k<M)$:} In this step, we show that for $k < M$, if $w^{k+1}(\mathcal{S}^{k+1}) \ge \frac{1}{4}w^{k+1}(\mathcal{Q})$, then $w^{k}(\mathcal{S}^{k}) \ge \frac{1}{4}w^{k}(\mathcal{Q})$. Assume $w^{k+1}(\mathcal{S}^{k+1}) \ge \frac{1}{4}w^{k+1}(\mathcal{Q})$. In line $4$ of $\mathtt{SARound}$, we pass $\mathbf{w_2^k}$ as the new utility vector in the recursive call (i.e., $\mathtt{SARound}(k+1,\mathbf{w_2^k},\mathcal{L})$). Thus, $\mathbf{w_2^k} = \mathbf{w^{k+1}}$. Based on the assumption of $w^{k+1}(\mathcal{S}^{k+1}) \ge \frac{1}{4}w^{k+1}(\mathcal{Q})$, we have $w_2^{k}(\mathcal{S}^{k+1}) \ge \frac{1}{4}w_2^{k}(\mathcal{Q})$. In line $3$ of $\mathtt{SARound}$, we set $w_1^k(\ell') = w^k(\ell')$ for all $\ell' \in \mathcal{L}^k$. Since $w_2^k(\ell') = w^k(\ell') - w_1^k(\ell')$, it follows that $w_2^k(\ell') = 0$ for all $\ell' \in \mathcal{L}^k$. Thus: 
    \begin{equation}\label{eq:theorem1-1}
        {\textstyle w_2^{k}(\mathcal{S}^{k}) = w_2^{k}(\mathcal{S}^{k+1}) \ge \frac{1}{4}w_2^{k}(\mathcal{Q}).}
    \end{equation}
    In line $5$, we remove any $\ell$ from $\mathcal{F}^k$ if there exists a service instance $\ell' \in \mathcal{S}^{k+1}$ that belongs to the same job (i.e., $\ell' \in \mathcal{L}(\ell)$); let $\mathcal{F}_{rm}^k$ be the set of service instances removed from $\mathcal{F}^k$. We can partition the solution $\mathcal{S}^k$ into a set $\mathcal{S}_1^k = \mathcal{F}^k \setminus \mathcal{F}_{rm}^k$ (the remaining instances in $\mathcal{F}^k$ after removing $\mathcal{F}_{rm}^k$), a set $\mathcal{S}_2^k$ containing the instances causing $\mathcal{F}_{rm}^k$ to be removed from $\mathcal{F}^k$, and a set $\mathcal{S}_3^k$ containing the remaining instances in $\mathcal{S}^k$. Based on the rule of decomposing $\mathbf{w^k}$ in line $3$, we have $w_1^k(\mathcal{S}_1^k) = w^k(\mathcal{S}_1^k) = w^k(\mathcal{F}^k \setminus \mathcal{F}_{rm}^k)$, $w_1^k(\mathcal{S}_2^k) = w^k(\mathcal{F}_{rm}^k)$, and $w_1^k(\mathcal{S}_3^k) = 0$.
    Hence, 
    \[w_1^k(\mathcal{S}^k) = w^k(\mathcal{F}^k \setminus \mathcal{F}_{rm}^k) + w^k(\mathcal{F}_{rm}^k) = w^k(\mathcal{F}^k).\]
    Meanwhile, the optimal solution $\mathcal{Q}$ can also be partitioned into a set $\mathcal{Q}_1$ containing instances in $\mathcal{L}^k$, a set $\mathcal{Q}_2$ containing instances in $\{\ell' \in \mathcal{Q} \mid \forall \ell \in \mathcal{F}^k, \ell' \in \mathcal{L}(\ell)\setminus \mathcal{L}^k\}$ (a similar case as $\mathcal{S}_2^k$), and a set $\mathcal{Q}_3$ containing the remaining instances in $\mathcal{Q}$. Based on line $3$ and Theorem \ref{theorem:floorRd}, we have $w_1^k(\mathcal{Q}_1) = w^k(\mathcal{Q}_1)\le3w^k(\mathcal{F}^k)$, $w_1^k(\mathcal{Q}_2) = w^k(\mathcal{Q}_2) \le w^k(\mathcal{F}^k)$, and $w_1^k(\mathcal{Q}_3) = 0$. Thus: 
    \[w_1^k(\mathcal{Q}) = w_1^k(\mathcal{Q}_1) + w_1^k(\mathcal{Q}_2) + w_1^k(\mathcal{Q}_3) \le 4w^k(\mathcal{F}^k). \]
    Since $w_1^k(\mathcal{S}^k) = w^k(\mathcal{F}^k)$, we have $w_1^k(\mathcal{S}^k) \ge \frac{1}{4}w_1^k(\mathcal{Q})$. Given $\mathbf{w^k}= \mathbf{w_1^k}+\mathbf{w_2^k}$ and Eq. \eqref{eq:theorem1-1}, we can derive $w^k(\mathcal{S}^k) \ge \frac{1}{4}w^k(\mathcal{Q})$. 
    Combining the results of the base case and the inductive step, we can conclude $w^k(\mathcal{S}^k) \ge \frac{1}{4}w^k(\mathcal{Q})$ for $k=M,..., 1$. Since the initial call to $\mathtt{SARound}$ is $\mathtt{SARound}$($1$, $\mathbf{u}$, $\mathcal{L}$), we have $w^1(\mathcal{S}^1) = u(\mathcal{S}^1) \ge \frac{1}{4}u(\mathcal{Q})$.
\end{proof}

\textbf{Discussion.} Based on the classification of knapsack problems \cite{CACCHIANI2022105693}, the DOAP $\mathbf{P}$ (or its equivalent $\mathbf{P_{ILP}}$) is a Two-dimensional Multiple-Choice Multiple Knapsack Problem (2DMCMKP). The best-known approximation guarantee for 2DMCMKP is $\frac{1}{6}$ \cite{gao2025local}, and we improve it to $\frac{1}{4}$ in this work. $\mathtt{SARound}$ leverages the $\mathtt{FloorRd}$ function to compute the solution for each RSU with the following LP property: in a basic feasible solution, the number of variables that can take positive values does not exceed the number of functional constraints. This fundamental property was employed by Patt-Shamir and Rawitz \cite{PATTSHAMIR20121591} for the multidimensional multiple-choice single knapsack problem. If we replace $\mathtt{FloorRd}$ with the method by Patt-Shamir and Rawitz in $\mathtt{SARound}$, we can further improve the theoretical guarantee for DOAP to $\frac{1}{2+\epsilon}$, with time complexity of $\mathcal{O}(M^2(NBC)^{3+\lceil2/\epsilon\rceil})$. However, this approach is computationally prohibitive: even when $\epsilon=2$, achieving the same guarantee as $\mathtt{SARound}$, its time complexity is approximately $NBC$ times higher, making it impractical for real-world scenarios. Furthermore, $\mathtt{SARound}$ is adaptable to $\mathbf{P}$ even when additional resource constraints are introduced, such as server memory constraints.

\section{Experimental Evaluation}\label{sec:experiment}
This section evaluates the performance of $\mathtt{SARound}$ by comparing it with four state-of-the-art benchmark algorithms from the literature. Specifically, we analyze the impact of network quality levels and scheduling modes on algorithm performance. All experiments were conducted on a desktop PC equipped with an Intel(R) Xeon(R) W-2235 CPU and 32 GB of RAM.

This paper considers a general and flexible "task-dependent" utility function, enabling its applicability to a wide range of real-world use cases. In this experimental evaluation, we demonstrate how complex objectives, such as energy saving, can be incorporated into this utility function. It is worth mentioning that, beyond energy saving, the utility function can also represent metrics such as age of information, operational safety, or application performance.  We define the utility function $u_i$ for each task $n_i$ in terms of the energy savings for vehicles, which depends on both the energy saving per job instance of $n_i$ and the frequency at which the job instance is released. Thus, we measure the energy saving of each task as the energy saving per second. Let $p_i^{l}$ denote the local processing power required for task $n_i$, and the energy consumption for local execution can be computed as $e_i^{l} = p_i^{l} \cdot d_i^{l}/\delta_i$. Suppose task $n_i$ is offloaded to RSU $m_k$ for processing. Let $p_i^{o}$ represent the offloading power of vehicle $v_i$. The energy consumption per second for offloading $n_i$ is given by $e_{ik}^{o} = p_i^{o} \cdot d_{ik}^{o}/\delta_i$. Thus, the energy saving per second for vehicle $v_i$ when offloading and executing $n_i$ on RSU $m_k$ is given by $e_{ik}^{save} = e_i^{l} - e_{ik}^{o}$ (the energy consumption of SRS feedback is omitted due to its negligible energy consumption compared with task offloading), i.e.,
\[\textstyle u_i(x_{ik},b_{ik},c_{ik})=e_{ik}^{save}=x_{ik}(p_i^{l} \cdot d_i^{l}-p_i^{o} \cdot \frac{\theta_i}{b_{ik}\cdot \mu_{ik}})/\delta_i.\]

\subsection{Simulation Setup}
Constructing a real VEC testbed is both technically challenging and costly. Therefore, we evaluate algorithm performance using our VEC simulator, VecSim. The simulation environment is built using real taxi GPS trajectory data \cite{taxi-data} to model vehicle mobility, collected in Shanghai on April 1, 2018, by the Shanghai Qiangsheng Taxi Company. We extract all taxi trajectories passing through a $1$ km $\times$ $1$ km urban area in Shanghai over a $15$-minute period in the evening, during which $80$ taxis traverse the area. To provide real-time services to surrounding vehicles, we deploy $15$ RSUs evenly along the roads. Each RSU manages a 5G network with $270$ RBs (i.e., $B_k=270$), delivering a total data rate of $37$ MB/s (obtained from profiling) under optimal channel quality. Additionally, each RSU is equipped with a GPU server containing $16$ CUs (i.e., $C_k=16$), where the GPU type is randomly selected from six models (TITAN V, TITAN RTX, RTX 2080 Ti, RTX 3090, RTX 4060 Ti, A100). Notably, for the same allocated CUs, the execution time of a given service varies across different GPU types, which are obtained through profiling task execution in each type of GPU. Furthermore, the service initialization time on various RSUs is randomly sampled from the range of $[10, 50]$ ms. Given the coverage range of $500$ to $1000$ meters for mid-band 5G base stations in urban environments, and an average vehicle speed of $60$km/h ($16.7$m/s), we set the scheduling interval to $10$ seconds to balance the dynamics of system state and scheduling overhead.

We obtain local execution data for real-time tasks by profiling them on Nvidia Jetson Nano, an embedded device capable of running GPU applications. Each vehicle runs one or more applications (tasks). For each task $n_i$, input data is randomly sampled from a set of 57 images in the ImageNet ILSVRC dataset \cite{ILSVRC15}, with sizes $\theta_j$ ranging from $0.07$MB to $0.3$MB. The application type is also randomly selected from six GPU-based object detection models (RESNET-18/50/101/152, VGG-16/19). We profile the Nvidia Jetson Nano to obtain each task's local execution time $d_i^l$, local processing power $p_i^l$, and offloading power $p_i^o$. Based on $d_i^l$, we set the period of $n_i$ to $50$, $67$, or $100$ ms, corresponding to task operating frequencies (i.e., $1/$period) of $20$, $15$, or $10$ Hz, respectively.

% \textit{Benchmark Algorithms}. We compare $\mathtt{SARound}$ against four algorithms derived from state-of-the-art research on DOAP as introduced in Section \ref{sec:related-work}: $\mathtt{Greedy}$, $\mathtt{Iterative}$, $\mathtt{Game}$, and $\mathtt{IDAssign}$. Among these, only $\mathtt{IDAssign}$ offers a theoretical guarantee of $\frac{1}{6}$, whereas the other three benchmark algorithms do not provide any theoretical guarantees for problem $\mathbf{P}$.
\textit{Benchmark Algorithms}. We compare $\mathtt{SARound}$ against four algorithms derived from state-of-the-art research on DOAP in edge computing systems \cite{gao2022deadline, hu2020dynamic, li2019cooperative, qi2024joint, fan2023joint, gao2025local}: $\mathtt{Greedy}$, $\mathtt{Iterative}$, $\mathtt{Game}$, and $\mathtt{IDAssign}$. Among these, only $\mathtt{IDAssign}$ offers a theoretical guarantee of $\frac{1}{6}$, whereas the other three benchmark algorithms do not provide any theoretical guarantees for problem $\mathbf{P}$.
\begin{itemize}
    \item \textbf{$\mathtt{Greedy}$} \cite{gao2022deadline}: For each $\ell \in \mathcal{L}$, $\mathtt{Greedy}$ defines resource efficiency as $\psi(\ell) = u(\ell)/(\frac{b_{\ell}}{B_\ell} \times \frac{c_{\ell}}{C_\ell})$, where $B_\ell$ and $C_\ell$ are the resource capacities of $m_\ell$. It then processes all $\ell \in \mathcal{L}$ in non-increasing order of $\psi(\ell)$, selecting each service whenever feasibility constraints are met.
    \item \textbf{$\mathtt{Iterative}$} \cite{hu2020dynamic, li2019cooperative, qi2024joint, fan2023joint}: $\mathtt{Iterative}$ decomposes $\mathbf{P}$ into two subproblems, job offloading and resource allocation, which are solved iteratively until convergence.
    \item \textbf{$\mathtt{Game}$} \cite{li2019cooperative}: $\mathtt{Game}$ models $\mathbf{P}$ as a non-cooperative game, in which each vehicle acts as a player, and selects the $\ell \in \mathcal{L}$ that maximizes total utility in each iteration.
    \item \textbf{$\mathtt{IDAssign}$} \cite{gao2025local}: $\mathtt{IDAssign}$ classifies service instances as either light or heavy and applies a recursive algorithm that prioritizes light instances before the heavier ones.
\end{itemize}

% \addtolength{\abovecaptionskip}{-1.5mm}
% \addtolength{\dbltextfloatsep}{-1.5mm}
\begin{figure*}[t]
    \centering
    \subfigure[]{\includegraphics[width=0.32\linewidth, page=1]{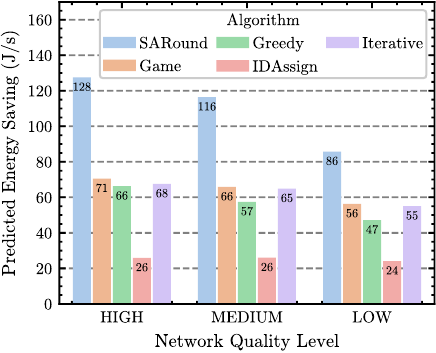} \label{fig:exp-pilot-1}} 
    % \hfill
    \subfigure[]{\includegraphics[width=0.32\linewidth, page=2]{figures/exp-pilot-all.pdf} \label{fig:exp-pilot-2}}
    \subfigure[]{\includegraphics[width=0.32\linewidth, page=3]{figures/exp-pilot-all.pdf} \label{fig:exp-pilot-5}} 
    \caption{In \textit{SchedAll} mode, average results for: (a) \textit{predicted} energy saving; (b) \textit{measured} energy saving; (c) number of offloaded job instances per second.}
    \label{fig:exp-pilot-energy}
\end{figure*}

\begin{figure*}[t]
    \centering
    \subfigure[]{\includegraphics[width=0.32\linewidth, page=1]{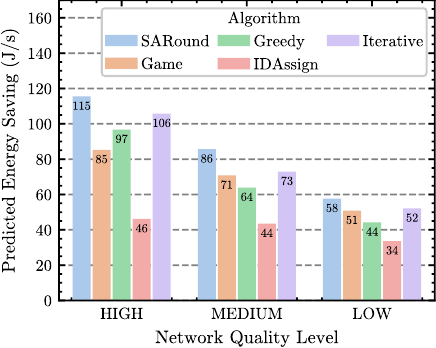} \label{fig:exp-pilot-3}} 
    % \hfill
    \subfigure[]{\includegraphics[width=0.32\linewidth, page=2]{figures/exp-pilot-remain.pdf} \label{fig:exp-pilot-4}}
    \subfigure[]{\includegraphics[width=0.32\linewidth, page=3]{figures/exp-pilot-remain.pdf} \label{fig:exp-pilot-6}}
    \caption{In \textit{SchedRemain} mode, average results for: (a) \textit{predicted} energy saving; (b) \textit{measured} energy saving; (c) number of offloaded job instances per second.}
    \label{fig:exp-pilot-energy2}
\end{figure*}

\subsection{Result Discussion}
In this experiment, we evaluate the performance of $\mathtt{SARound}$ from two perspectives: network quality level and scheduling mode. We consider three distinct network quality levels: LOW, MEDIUM, and HIGH. The HIGH network quality level corresponds to a high channel quality with minimal fluctuations over time, while the MEDIUM and LOW levels represent progressively lower channel quality with greater fluctuations. For scheduling mode, both \textit{SchedAll} and \textit{SchedRemain} are evaluated. Algorithm performance is measured using two metrics: predicted and measured energy savings. \textit{Predicted energy saving} refers to the energy savings estimated by the scheduling algorithm, roughly representing its performance under a static network quality assumption. In contrast, \textit{measured energy saving} quantifies the energy savings measured based on the actual offloaded job instances in simulations, accounting for network fluctuations as well as the overheads of scheduling algorithm execution and service initialization. Since the simulation runs for $15$ minutes, we average the energy saving values to Joules per second (J/s) for consistency.

% \begin{figure}[t]
%     \centering
%     \subfigure[]{\includegraphics[width=0.48\linewidth, page=1]{figures/exp-pilot-all.pdf} \label{fig:exp-pilot-1}} 
%     % \hfill
%     \subfigure[]{\includegraphics[width=0.48\linewidth, page=2]{figures/exp-pilot-all.pdf} \label{fig:exp-pilot-2}}
%     \subfigure[]{\includegraphics[width=0.48\linewidth, page=1]{figures/exp-pilot-remain.pdf} \label{fig:exp-pilot-3}} 
%     % \hfill
%     \subfigure[]{\includegraphics[width=0.48\linewidth, page=2]{figures/exp-pilot-remain.pdf} \label{fig:exp-pilot-4}}
%     \Description{energy saving results for different network fluctuation levels}
%     \caption{Under various network quality levels, the average results for (a) \textit{predicted} energy in the \textit{SchedAll} mode; (b) \textit{measured} energy in the \textit{SchedAll} mode; (c) \textit{predicted} energy in the \textit{SchedRemain} mode; (d) \textit{measured} energy in the \textit{SchedRemain} mode.}
%     \label{fig:exp-pilot-energy}
% \end{figure}

% \begin{figure}[t]
%     \centering
%     \subfigure[]{\includegraphics[width=0.48\linewidth, page=3]{figures/exp-pilot-all.pdf} \label{fig:exp-pilot-5}} 
%     % \hfill
%     \subfigure[]{\includegraphics[width=0.48\linewidth, page=3]{figures/exp-pilot-remain.pdf} \label{fig:exp-pilot-6}}
%     \Description{the actually offloaded job count of different scheduling modes}
%     \caption{Average number of offloaded job instances per second in (a) \textit{SchedAll} and (b) \textit{SchedRemain} modes.}
%     \label{fig:exp-pilot-count}
% \end{figure}

The average predicted and measured energy savings under various network quality levels in the \textit{SchedAll} mode are presented in Figs. \ref{fig:exp-pilot-1} and \ref{fig:exp-pilot-2}, respectively. As observed, both energy values across all algorithms decrease as channel quality deteriorates. This is primarily because a lower channel quality leads to a smaller MCS to ensure reliable data transmission. The smaller MCS lowers the data offloading rate, thereby limiting the number of requests that can be accommodated within the same bandwidth resources. Notably, $\mathtt{SARound}$ consistently outperforms all benchmark algorithms in both predicted and measured energy saving across all network quality levels. When normalizing energy values based on the average results of $\mathtt{SARound}$ across all network quality levels, the predicted energy saving (Fig. \ref{fig:exp-pilot-1}) of $\mathtt{SARound}$ exceeds that of $\mathtt{Game}$ by $41.54\%$, $\mathtt{Greedy}$ by $48.16\%$, $\mathtt{IDAssign}$ by $76.94\%$, and $\mathtt{Iterative}$ by $43.10\%$. Similarly, the measured energy saving (Fig. \ref{fig:exp-pilot-2}) of $\mathtt{SARound}$ surpasses $\mathtt{Game}$ by $22.98\%$, $\mathtt{Greedy}$ by $42.95\%$, $\mathtt{IDAssign}$ by $75.09\%$, and $\mathtt{Iterative}$ by $20.26\%$. These results demonstrate the superior resource utilization efficiency of $\mathtt{SARound}$ in the \textit{SchedAll} mode. 
%In particular, when the channel quality is high, both predicted and measured energy saving achieved by $\mathtt{SARound}$ significantly exceed those of the benchmark algorithms.

The average predicted and measured energy saving under different network quality levels in the \textit{SchedRemain} mode are presented in Figs. \ref{fig:exp-pilot-3} and \ref{fig:exp-pilot-4}, respectively. $\mathtt{SARound}$ maintains superior performance in both predicted and measured energy savings compared to all benchmark algorithms. 
%When normalizing energy values based on the average results of $\mathtt{SARound}$
On average, the predicted energy saving of $\mathtt{SARound}$ exceeds that of $\mathtt{Game}$ by $20.04\%$, $\mathtt{Greedy}$ by $20.78\%$, $\mathtt{IDAssign}$ by $52.34\%$, and $\mathtt{Iterative}$ by $10.79\%$. Similarly, the measured energy saving of $\mathtt{SARound}$ outperforms $\mathtt{Game}$ by $22.84\%$, $\mathtt{Greedy}$ by $13.47\%$, $\mathtt{IDAssign}$ by $50.54\%$, and $\mathtt{Iterative}$ by $8.25\%$. It is observed that the measured energy saving exceeds the predicted energy saving when the network quality level is HIGH for $\mathtt{SARound}$, $\mathtt{Greedy}$, and $\mathtt{IDAssign}$. This occurs because the channel quality used for scheduling is not always optimal; an improved channel quality after scheduling can enhance energy savings for the same RB allocation.
%\aea{Why is measured energy saving more than predicted energy in some cases? Must be explained in text briefly.}
Combining the results from Figs. \ref{fig:exp-pilot-1}, \ref{fig:exp-pilot-2}, \ref{fig:exp-pilot-3}, and \ref{fig:exp-pilot-4}, $\mathtt{SARound}$ consistently achieves higher energy savings than baselines across all network quality levels and scheduling modes, demonstrating its effectiveness in resource allocation. 

\begin{figure*}[t]
    \centering
    \subfigure[]{\includegraphics[width=0.43\linewidth, page=1]{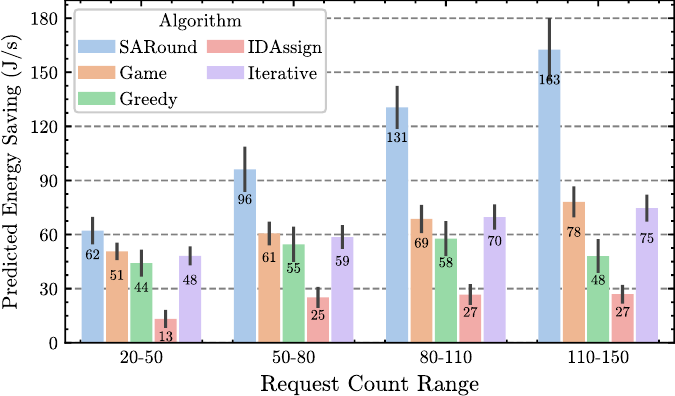} \label{fig:exp-count-1}} 
    % \hfill
    \hspace{1cm}
    \subfigure[]{\includegraphics[width=0.42\linewidth, page=2]{figures/exp-count.pdf} \label{fig:exp-count-2}}
    \caption{For different request counts, the results for (a) predicted energy saving and (b) algorithm execution time.}
    \label{fig:exp-count}
\end{figure*}

\textit{Impact of Scheduling Mode on $\mathtt{SARound}$.} Across all network quality levels, $\mathtt{SARound}$ achieves a higher predicted energy saving in the \textit{SchedAll} mode compared to the \textit{SchedRemain} mode (Figs. \ref{fig:exp-pilot-1} and \ref{fig:exp-pilot-3}). This is because, in the \textit{SchedAll} mode, all requests are scheduled using the full RSU resource, 
%whereas in the \textit{SchedRemain} mode, only a subset of requests are scheduled with the remaining (available) RSU resources. As a result, $\mathtt{SARound}$ can implement 
enabling $\mathtt{SARound}$ to implement finer-grained resource allocation in the \textit{SchedAll} mode.
% , leading to a higher predicted energy saving. 
However, despite the lower predicted energy saving in the \textit{SchedRemain} mode, its measured energy saving is higher (Figs. \ref{fig:exp-pilot-2} and \ref{fig:exp-pilot-4}). 
This is because, in the \textit{SchedRemain} mode, running services are not terminated as long as task deadlines can still be met. Over time, this leads to an accumulation of services with longer slack times (i.e., the time between job completion and its deadline) that are more tolerant to network fluctuations, resulting in fewer job instance rejections. Furthermore, these services avoid delays caused by resource rescheduling and service re-initializing, allowing more job instances to be offloaded.

%This discrepancy stems from the distinct scheduling mechanisms in the two modes. In the \textit{SchedAll} mode, $\mathtt{SARound}$ allocates minimal resources to each request while ensuring its deadline is met. This strategy maximizes the number of admitted requests and prevents resource waste but results in shorter slack times (i.e., the time between job completion and its deadline). Although this approach allows $\mathtt{SARound}$ to outperform other heuristics in predicted energy, it also makes admitted requests more vulnerable to network quality fluctuations, leading to a more pronounced reduction in measured energy saving. In contrast, in the \textit{SchedRemain} mode,
%This is because $\mathtt{SARound}$ tends to schedule services with longer slack times in the \textit{SchedRemain} mode. 
%In the \textit{SchedRemain} mode, services with short slack times are more likely to be terminated by the end of the scheduling cycle (and hence rescheduled) due to network status changes. Meanwhile, since running services are not terminated as long as task deadlines can still be met, an increasing number of services with longer slack times persist in the system over time. These services exhibit greater resilience to network quality fluctuations, resulting in fewer job rejections. Furthermore, they avoid unnecessary delays caused by scheduling algorithm execution and service initialization. Consequently, $\mathtt{SARound}$ achieves greater energy savings in the \textit{SchedRemain} mode.

The average number of offloaded job instances per second in the \textit{SchedAll} and \textit{SchedRemain} modes is illustrated in Figs. \ref{fig:exp-pilot-5} and \ref{fig:exp-pilot-6}, respectively. In the \textit{SchedAll} mode, $\mathtt{SARound}$ consistently offloads more jobs than all benchmark algorithms across all network quality levels, and the performance advantage increases as network quality improves. While this advantage is less pronounced in the \textit{SchedRemain} mode, $\mathtt{SARound}$ still achieves a higher average number of offloaded jobs compared to the benchmark algorithms. Besides, for any network quality level, more job instances are offloaded in the \textit{SchedRemain} mode compared to the \textit{SchedAll} mode for all algorithms. This is because, in the \textit{SchedRemain} mode, running services are not terminated at the end of a scheduling cycle. This reduces unnecessary overheads for scheduling algorithm execution and service initialization in each new scheduling cycle, ensuring a higher overall offloaded job count. 

Lastly, we analyze how the predicted energy saving and execution time of different scheduling algorithms vary with the number of pending requests. To ensure consistent RSU resource availability, we select the \textit{SchedAll} mode and maintain the network quality level at MEDIUM. The average predicted energy saving is shown in Fig. \ref{fig:exp-count-1}, while the average algorithm execution time is presented in Fig. \ref{fig:exp-count-2}. 
%where the black line in the middle of each bar represents the standard deviation. 
As illustrated in Fig.~\ref{fig:exp-count-1}, $\mathtt{SARound}$ outperforms all benchmark algorithms across various request count ranges, with its performance advantage increasing significantly as the request count grows. This improvement is primarily due to the diminishing impact of LP rounding loss in $\mathtt{SARound}$ as the number of requests increases, leading to enhanced performance. 
%These results underscore the superior ability of $\mathtt{SARound}$ to handle large input sizes compared to the benchmark algorithms. 
Fig.~\ref{fig:exp-count-2} shows that the average execution time of $\mathtt{SARound}$ is approximately $2.1\times$ that of $\mathtt{Game}$, $1.3\times$ that of $\mathtt{Greedy}$, $0.9\times$ that of $\mathtt{IDAssign}$, and $5.5\times$ that of $\mathtt{Iterative}$. Notably, the execution time of $\mathtt{SARound}$ increases linearly with the number of requests, indicating strong scalability. Moreover, within our framework, a longer algorithm runtime only postpones task offloading without affecting the execution window of individual job instances. Therefore, it is practical to deploy $\mathtt{SARound}$ in real-world systems.

%Notably, the execution time of $\mathtt{SARound}$ exhibits a linear relationship with the request count, demonstrating its strong scalability. Besides, in our framework, a longer algorithm execution time only delays the task offloading but does not influence the individual job instance execution window, making the algorithm feasible for practical deployment.

\textit{Conclusion.} In this experiment, we define the task utility function as the energy saving for vehicles. Experimental results demonstrate that $\mathtt{SARound}$ consistently outperforms benchmark heuristics in both predicted energy saving and measured energy saving across various network quality levels and scheduling modes. Additionally, $\mathtt{SARound}$ admits more job instances during the simulation, highlighting its superior resource allocation efficiency and fairness. 
%While $\mathtt{SARound}$ achieves higher predicted energy in the \textit{SchedAll} mode, it attains greater measured energy in the \textit{SchedRemain} mode. Since the \textit{SchedRemain} mode ensures better service continuity, it emerges as the more favorable scheduling strategy for $\mathtt{SARound}$. 
Lastly, $\mathtt{SARound}$ achieves improved performance with increasing request counts and exhibits a linearly increasing execution time with the request count in practice, underscoring its runtime efficiency.
% underscoring its scalability and suitability for large-scale systems.

\section{Conclusion}\label{sec:conclusion}
This paper addressed a DOAP in VEC, problem $\mathbf{P}$, which jointly considered task offloading and resource allocation under both wireless bandwidth and computational resource constraints, aiming to maximize the total utility for vehicles. To address $\mathbf{P}$, we proposed an approximation algorithm, $\mathtt{SARound}$, which improved the best-known approximation ratio from $\frac{1}{6}$ to $\frac{1}{4}$ while maintaining runtime efficiency. To address the challenges of short task deadlines and changing wireless network conditions in VEC, we introduced an online service subscription and offloading control framework that enables dynamic task life-cycle management. Additionally, we developed a novel VEC simulator, VecSim, integrating the full framework and used it to evaluate algorithm performance. Results showed that $\mathtt{SARound}$ significantly outperformed existing baselines
%in total energy savings and job offloading rates 
across various network conditions and scheduling modes.

In this work, we focus on global resource optimization in VEC systems under a centralized scheduling architecture. While this design facilitates coordinated decision-making, it also introduces a single point of failure. As future work, we plan to explore decentralized architectures for VEC to enhance system robustness against both scheduler failures and communication disruptions between RSUs and the scheduler.
%stable and consistent service execution on RSU servers. However, unexpected events such as server failures can impact service availability and task completion in VEC. In future work, we plan to develop a more robust resource allocation strategy that proactively mitigates the risk of service failures while ensuring that tasks meet their deadline constraints. 

\newpage

\bibliographystyle{IEEEtran}
\bibliography{IEEEabrv,reference}

\end{document}